\newcommand{\ket}[1]{|#1\rangle}
\newtheorem{thm}{Theorem}[subsection]
\newtheorem{define}[thm]{Definition}
\newtheorem{lem}[thm]{Lemma}
\newtheorem{prop}[thm]{Proposition}
\newtheorem{corr}[thm]{Corollary}
\newtheorem{assumption}{Assumption}
\numberwithin{equation}{subsection}
\begin{document} 

\begingroup  
\centering
{\Large\textbf{Computation in generalised probabilistic theories} \\[1.5em]
  \normalsize Ciar\'{a}n M. Lee\footnote{
Electronic address: ciaran.lee@cs.ox.ac.uk}  and Jonathan Barrett}\\[1em]
{\it  University of Oxford, Department of Computer Science, Wolfson Building, Parks Road, Oxford OX1 3QD, UK.}\\[1em]
\endgroup

\begin{abstract} 
From the general difficulty of simulating quantum systems using classical systems, and in particular the existence of an efficient quantum algorithm for factoring, it is likely that quantum computation is intrinsically more powerful than classical computation. At present, the best upper bound known for the power of quantum computation is that $\bold{BQP}\subseteq\bold{AWPP}$, where  $\bold{AWPP}$ is a classical complexity class (known to be included in $\bold{PP}$, hence $\bold{PSPACE}$). This work investigates limits on computational power that are imposed by simple physical, or information theoretic, principles. To this end, we define a circuit-based model of computation in a class of operationally-defined theories more general than quantum theory, and ask: what is the minimal set of physical assumptions under which the above inclusions still hold? We show that given only an assumption of tomographic locality (roughly, that multipartite states and transformations can be characterised by local measurements), efficient computations are contained in $\bold{AWPP}$. This inclusion still holds even without assuming a basic notion of causality (where the notion is, roughly, that probabilities for outcomes cannot depend on future measurement choices). Following Aaronson, we extend the computational model by allowing post-selection on measurement outcomes. Aaronson showed that the corresponding quantum complexity class, $\bold{PostBQP}$, is equal to $\bold{PP}$. Given only the assumption of tomographic locality, the inclusion in $\bold{PP}$ still holds for post-selected computation in general theories. Hence in a world with post-selection, quantum theory is optimal for computation in the space of all operational theories. We then consider whether one can obtain relativised complexity results for general theories. It is not obvious how to define a sensible notion of a computational oracle in the general framework that reduces to the standard notion in the quantum case. Nevertheless, it is possible to define computation relative to a `classical oracle'. Then, we show there exists a classical oracle relative to which efficient computation in any theory satisfying the causality assumption does not include $\bold{NP}$. 
\end{abstract}
 
\section{Introduction} 

Quantum theory offers dramatic new advantages for various information theoretic tasks \cite{Nielsen}. This raises the general question of what broad relationships exist between physical principles, which a theory like quantum theory may or may not satisfy, and information theoretic advantages.
Much progress has already been made in understanding the connections between physical principles and some tasks, such as cryptography and communication complexity problems. It is now known that the degree of non-locality in a theory is related to its ability to solve communication complexity problems \cite{PR} and to its ability to perform super-dense coding, teleportation and entanglement swapping \cite{DensePR}. Teleportation and no-broadcasting are now better understood than they were when investigated solely from the viewpoint of quantum theory \cite{tele, broadcast}. Cryptographic protocols have been developed whose security relies not on aspects of the quantum formalism, but on general physical principles. For example, device-independent key distribution schemes have been developed that are secure against attacks by post-quantum eavesdroppers limited only by the no-signalling principle \cite{Crypt}. 

By comparison, relatively little has been learned about the connections between physical principles and computation. It was shown in \cite{Rev} that a maximally non-local theory has no non-trivial reversible dynamics and, thus, any reversible computation in such a theory can be efficiently simulated on a classical computer. Aside from this result, most previous investigations into computation beyond the usual quantum formalism have centred around non-standard theories involving modifications of quantum theory. These theories often appear to have immense computational power and entail unreasonable physical consequences. For example, non-linear quantum theory appears to be able to solve $\bold{NP}$-complete problems in polynomial time \cite{Non}, as does quantum theory in the presence of closed timelike curves \cite{CTC, JMA}. Aaronson has considered other modifications of quantum theory, 
such as a hidden variable model in which the history of hidden states can be read out by the observer \cite{sample}, and these have also been shown to entail computational speedups over the usual quantum formalism. 
 
This work considers computation in a framework suitable for describing essentially arbitrary operational theories, where an operational theory specifies a set of laboratory devices that can be connected together in different ways, and assigns probabilities to experimental outcomes. Theories within this framework can be described that are different from classical or quantum theories, but which nonetheless make good operational sense and do not involve peculiarities like closed timelike curves. The framework, described in Section~\ref{framework} suggests a natural model of computation, analogous to the classical and quantum circuit models, described in Section~\ref{compmodel}. 

The strongest known non-relativised upper bound for the power of quantum computation is that the class $\bold{BQP}$ of problems efficiently solvable by a quantum computer is contained in the classical complexity class $\bold{AWPP}$. The class $\bold{AWPP}$ has a slightly obscure definition, but is well known to be contained in $\bold{PP}$, hence $\bold{PSPACE}$. Section~\ref{upperboundscomppower} shows that the same result holds for any theory in the operational framework that satisfies the principle of tomographic locality, where this means, roughly, that transformations can be completely characterised by product states and effects. That is, if the complexity class of problems that can be efficiently solved by a specific theory $\bold{G}$ is denoted schematically $\bold{BGP}$, then for tomographically local theories, $\bold{BGP} \subseteq \bold{AWPP}$. Once suitable definitions are in place, the proof is essentially the same as the proof for the quantum case: the idea is that this proof can be cast in a theory-independent manner, and be seen to follow from a very minimal set of assumptions on the structure of a physical theory. In fact, the containment $\bold{BGP} \subseteq \bold{AWPP}$ still holds even in the absence of a basic principle of causality (which, if it does hold, ensures that there can be no signalling from future to past). 


It was suggested in \cite{Jon} that quantum theory achieves, in some sense, an optimal balance between its set of states and its dynamics, and that this balance entails that quantum theory is powerful for computation by comparison with most theories in the space of operational theories. Although the status of this suggestion is unknown, it turns out to be exactly correct in the context of a world allowing post-selection of measurement outcomes. Aaronson showed that the class of problems efficiently solvable by a quantum computer with the ability to post-select measurement outcomes is equal to the class $\bold{PP}$ \cite{Post}. Section~\ref{post} extends the idea of computation with post-selection to general theories, and shows that given (as always) tomographic locality, problems efficiently solvable by any theory with post-selection are contained in $\bold{PP}$. In other words: any problem efficiently solvable in a tomographically local theory with post-selection, is also efficiently solvable by a quantum computer with post-selection.

Finally, oracles play a special role in quantum computation, forming the basis of most known computational speed-ups over classical computation. Section~\ref{oracle} discusses the problem of defining a sensible notion of oracle in the general framework, that reduces to the standard definition in quantum theory. This problem may not have a solution that is completely general, hence we introduce instead a notion of `classical oracle' that can be defined in any theory that satisfies the causality principle. There then exists a classical oracle such that relative to this oracle, $\bold{NP}$ is not contained in $\bold{BGP}$ for any theory $\bold{G}$ satisfying tomographic locality and causality . 

\section{The framework}\label{framework}   

We will work in the circuit framework for generalised probabilistic theories developed by Hardy in \cite{Hardy1, Hardy2} and Chiribella, D'Ariano and Perinotti in \cite{Pavia1,Pavia2}. The presentation here is most similar to that of Chiribella et al. 

\subsection{Tests and circuits} \label{intro}

The idea of a generalised probabilistic theory is that a set of physical, or laboratory, devices is specified, which can be connected together in different ways, such that the theory will give probabilities for different outcomes. Such theories take \emph{tests} as their primitive notions, where a test can be thought of as corresponding to a physical device with input ports, output ports, and a classical pointer. Whenever the test is applied, the pointer ends up in one of a number of positions indicating a classical outcome. Input and output ports are typed, with types given by labels $A,B,C\dots$. As discussed in more detail below, tests can be composed both sequentially and in parallel, and when tests are composed sequentially, types must match: the output ports of the first device must have the same types as the corresponding input ports of the second.

Suppose that for a particular test, the classical outcome $r$ takes values in a set $X$. We shall assume throughout that $|X|$ is finite. A test ${\mathcal E}$, with specified input and output types, then defines a set of \emph{events}, one for each classical outcome, $\{\mathcal{E}_r\}_{r\in{X}}$. With an input port of type $A$ and an output port of type $B$, for example, the test can be represented diagrammatically as
\[
\begin{xy}
(22.5,0)*++[0][F-:<3pt>]{\text{$\{\mathcal{E}_r\}_{r\in{X}}$}}="1";
(0,0);"1"**\dir{-};
(5,2)*{\text{$A$}};
"1";(45,0)**\dir{-};
(40,2)*{\text{$B$}};
\end{xy}$$
and a specific event as
$$\begin{xy}
(22.5,0)*++[0][F-:<3pt>]{\text{$\mathcal{E}_{r}$}}="1";
(0,0);"1"**\dir{-};
(5,2)*{\text{$A$}};
"1";(45,0)**\dir{-};
(40,2)*{\text{$B$}};
\end{xy}
\]
A test is \emph{deterministic} if its outcome set $X$ is the singleton set. 

Although tests, with input and output ports, and a pointer, form the primitives of the operational theory, it is also useful to introduce a notion of \emph{physical system}. A system may be thought of as passing between the output port of a device, and the input port of the next, and has the same type as the ports. In other words, in the diagrams above and below, systems correspond to wires. Given two systems of types $A$ and $B$, we can form a \emph{composite system} of type $AB$. Operationally, a test with input system $AB$ corresponds to a physical device with a set of input ports labelled by $A$ and a disjoint set of input ports labelled by $B$. 

A test with no input ports corresponds to a preparation of a system -- more precisely, such a test corresponds to a set of preparations, with the classical pointer indexing which preparation actually occurs. Such a test can be represented diagrammatically as:
$$\begin{xy}
(9,0)*{\text{$\mathcal{E}_r$}};
(12,5);(12,-5)**\dir{-};
(12,5);(2,0)**\dir{-};
(12,-5);(2,0)**\dir{-};
(12,0);(20,0)**\dir{-};
(18,2)*{\text{$A$}};
\end{xy}$$
 A test with no output ports corresponds to a measurement (that destroys or discards the system), with the classical pointer indexing the measurement outcome. Diagrammatically, such a test can be written:
$$\begin{xy}
(12,0)*{\text{$\mathcal{E}_r$}};
(9,5);(9,-5)**\dir{-};
(9,5);(19,0)**\dir{-};
(9,-5);(19,0)**\dir{-};   
(1,0);(9,0)**\dir{-};
(3,2)*{\text{$A$}};
\end{xy}$$

Both tests and events can be composed in sequence and in parallel. If $\{\mathcal{E}_{r_1}\}_{r_1\in{X_1}}$ is a test from system $A$ to $B$ and $\{\mathcal{U}_{r_2}\}_{r_2\in{X_2}}$ is a test from system $B$ to $C$, then their sequential composition is a test from $A$ to $C$ with outcomes $(r_1,r_2)\in{X_1\times{X_2}}$ and events $\{\mathcal{U}_{r_2}\circ\mathcal{E}_{r_1}\}_{(r_1,r_2)\in{X_1\times{X_2}}}$. Similarly,  if $\{\mathcal{E}_{r_1}\}_{r_1\in{X_1}}$ is a test from system $A$ to $B$ and $\{\mathcal{U}_{r_2}\}_{r_2\in{X_2}}$ is a test from system $C$ to $D$, then their parallel composition is a test from the composite system $AC$ to the composite system $BD$ with outcomes $(r_1,r_2)\in{X_1\times{X_2}}$ and events $\{\mathcal{U}_{r_2}\otimes\mathcal{E}_{r_1}\}_{(r_1,r_2)\in{X_1\times{X_2}}}$. 
Sequential and parallel composition satisfy 
$$\big(\mathcal{U}_{r_3}\otimes\mathcal{E}_{r_4}\big)\circ\big(\mathcal{F}_{r_1}\otimes\mathcal{K}_{r_2}\big)=\big(\mathcal{U}_{r_3}\circ\mathcal{F}_{r_1}\big)\otimes\big(\mathcal{E}_{r_4}\circ\mathcal{K}_{r_2}\big)$$
for every $\mathcal{U}_{r_3}, \mathcal{E}_{r_4}, \mathcal{F}_{r_1}, \mathcal{K}_{r_2}$ with the property that the output of $\mathcal{F}_{r_1}$ (respectively, $\mathcal{K}_{r_2}$) matches the input of $\mathcal{U}_{r_3}$ (respectively, $\mathcal{E}_{r_4}$).  A generalised probabilistic theory specifies a set of tests, closed under sequential and parallel composition. 

A circuit in a generalised probabilistic theory corresponds to a number of tests, connected in sequence and in parallel, such that there are no unconnected ports (i.e., no dangling input or output wires), and no cycles. \footnote{Connected sets of tests with dangling wires may be called \emph{open circuits}, but this work has no need to consider open circuits, so we use the term \emph{circuit} throughout to refer to a closed circuit.}  
For example: 
$$\begin{xy}
(24,5)*++[0][F-:<3pt>]{\text{$\{\mathcal{F}_{r_2}\}$}}="1";
"1";(12,5)**\dir{-};
(7.5,0)*{\text{$\{\mathcal{E}_{r_1}\}$}};
(12,-5);(36,-5)**\dir{-};
(24,-8)*{\text{$C$}};
(12,9);(12,-9)**\dir{-};
(0,0);(12,9)**\dir{-};
(0,0);(12,-9)**\dir{-};
(15,2)*{\text{$A$}};
"1";(36,5)**\dir{-};
(33,2)*{\text{$B$}};
(36,9);(36,-9)**\dir{-};
(36,9);(48,0)**\dir{-};
(36,-9);(48,0)**\dir{-};
(40.5,0)*{\text{$\{\mathcal{G}_{r_3}\}$}};
\end{xy}$$
A specific outcome of the above circuit corresponds to a particular classical outcome for each of the tests, i.e., to a collection of events, connected in sequence and in parallel:
\begin{equation}\label{circuit outcome}
\begin{xy}
(24,5)*++[0][F-:<3pt>]{\text{$\mathcal{F}_{r_2}$}}="1";
"1";(12,5)**\dir{-};
(7.5,0)*{\text{$\mathcal{E}_{r_1}$}};
(12,-5);(36,-5)**\dir{-};
(24,-8)*{\text{$C$}};
(12,9);(12,-9)**\dir{-};
(0,0);(12,9)**\dir{-};
(0,0);(12,-9)**\dir{-};
(15,2)*{\text{$A$}};
"1";(36,5)**\dir{-};
(33,2)*{\text{$B$}};
(36,9);(36,-9)**\dir{-};
(36,9);(48,0)**\dir{-};
(36,-9);(48,0)**\dir{-};
(40.5,0)*{\text{$\mathcal{G}_{r_3}$}};
\end{xy}
\end{equation}

\subsection{Probabilistic structure}

So far, we have described the operational part of a generalised probabilistic theory, but not the probabilistic part. In addition to specifying a set of tests, hence sets of circuits and circuit outcomes, a probabilistic theory should assign probabilities to circuit outcomes. In a generalised probabilistic theory, every outcome of a circuit is assigned a probability $P(r_1r_2\ldots r_n)$, understood as the joint probability of outcomes $r_1,\ldots ,r_n$ for the individual tests occurring on a single run. The joint probabilities satisfy $\sum_{r_1 r_2\ldots r_n} P(r_1r_2\ldots r_n) = 1$. 
A further constraint is that probabilities for unconnected, i.e., independent, circuits factorise. This means that for events $\mathcal{E}_{r_1r_2\ldots r_m}$ and $\mathcal{F}_{s_1s_2\ldots s_n}$, each of which corresponds to the outcome of a closed circuit, probabilities assigned to the composite events $\mathcal{E}_{r_1r_2\ldots r_m} \otimes \mathcal{F}_{s_1s_2\ldots s_n}$, and $\mathcal{E}_{r_1r_2\ldots r_m} \circ \mathcal{F}_{s_1s_2\ldots s_n}$, each satisfy $P(r_1\ldots r_m,s_1\ldots s_n) = P(r_1\ldots r_m) P(s_1\ldots s_n)$.

The introduction of probabilities into the theory induces linear structure that will be crucial in what follows. Consider two events $\mathcal{E}_0$ and $\mathcal{E}_1$, whose input and output ports have matching types. Suppose that for every closed circuit, and every outcome of the circuit, replacing $\mathcal{E}_0$ with $\mathcal{E}_1$ does not change the probability of the outcome. In this case, $\mathcal{E}_0$ and $\mathcal{E}_1$ are \emph{equivalent}. The events $\mathcal{E}_0$ and $\mathcal{E}_1$ may be easily distinguished operationally by the fact that the corresponding physical devices look quite different, but there is no distinction between $\mathcal{E}_0$ and $\mathcal{E}_1$ from the point of view of the probabilistic predictions of the theory. We refer to the equivalence classes of events formed in this way as \emph{transformations}. The following will mostly be concerned with transformations, rather than the underlying primitive events. Transformations with no input ports we will sometimes call \emph{states}, and transformations with no output ports, \emph{effects}. For system types $A$ and $B$, the sets of transformations from $A$ to $B$, states on $A$ and effects on $B$ are denoted $\bold{Transf(A,B)}$, $\bold{St(A)}$, and $\bold{Eff(B)}$ respectively.

Quantum theory provides a specific example of a theory that can be described in this framework. A system is associated with a complex Hilbert space, with the type of the system given by the dimension of the Hilbert space. States and effects are associated with positive operators, and transformations are associated with trace non-increasing completely positive maps. A test with no input ports corresponds to what is sometimes called a `random source of quantum states', and is associated with positive operators $\{\rho_r\}$ such that $\sum_r\mathrm{Tr}(\rho_r)=1$. When the test is performed, the probability that the classical pointer takes position $r$ is given by $\mathrm{Tr}(\rho_r)$, and the quantum state that is prepared, conditioned on the pointer reading being $r$, is the normalised operator $\rho_r / \mathrm{Tr}(\rho_r)$. A test with no output ports is associated with a positive operator-valued measurement, that is a set of positive operators $\{E_i\}$ satisfying $\sum_i E_i=\mathbb{I}$. A test with both input and output ports is associated with a \emph{quantum instrument}, that is a set of trace non-increasing completely positive maps, one for each value of the pointer reading $r$, that sum to a trace-preserving map. Given these associations, the standard rules of quantum theory allow the probability to be calculated for any circuit outcome.

Returning to the general framework, it is convenient to use the `Dirac-like' notation $|\sigma_{r})_A$ to represent a state of system type $A$, and $_A(\lambda_r|$ to represent an effect on system type $A$, so that if the state $|\sigma_{r_1})_A$ is followed by the effect $_A(\lambda_{r_2}|$, the joint probability of obtaining outcome $r_1$ for the preparation and outcome $r_2$ for the measurement is given by
\[
_A(\lambda_{r_2}|\sigma_{r_1})_A := P(r_1,r_2).
\]
In the following, we shall sometimes drop the input/output type label. A state $|\sigma_{r_1})_A$ can be identified with a function from effects on $A$ into probabilities, such that 
\[
_A(\lambda_{r_2}| \mapsto {_A(\lambda_{r_2}}|\sigma_{r_1})_A.
\]
Since one can take linear combinations of functions, the set of states $\bold{St(A)}$ can be extended to a real vector space, which we denote $\bold{V_A}$. In quantum theory, for example, states are positive operators, which span the real vector space $\bold{V_A}$ of Hermitian operators.

Similarly, an effect $_A(\lambda_{r_2}|$ can be identified with a function from preparation events to probabilities: 
\[
|\sigma_{r_1})_A \mapsto {_A(\lambda_{r_2}}|\sigma_{r_1})_A,
\]
and the set of effects $\bold{Eff(A)}$ can be extended to a real vector space $\bold{V^A}$. A more general kind of transformation, from (possibly composite) system type $A$ to (possibly composite) system type $B$, defines a function into probabilities, where the domain is now circuit fragments with the property that there are unconnected input and output ports, such that adding in a transformation of this type results in a closed circuit. Again, this means that the set of transformations $\bold{Transf(A,B)}$ can be extended to a real vector space, denoted $\bold{V_B^A}$.

Throughout the paper, we adopt
\begin{assumption}\label{finitedim}
For every pair of system types $A$ and $B$, and every transformation from $A$ to $B$, $\bold{V^A_B}$ is finite dimensional.
\end{assumption}
As a consequence, the vector space generated by effects on a system can be regarded as dual to the space of states, and vice versa: $V^A = (V_A)^*$ and $V_A = (V^A)^*$. In other works on generalised probabilistic theories, it is quite often assumed that the sets $\bold{Transf(A,B)}$, $\bold{St(A)}$, and $\bold{Eff(B)}$ are convex subsets of the corresponding vector spaces, the idea being that probabilistic mixtures of allowed transformations should also be allowed transformations. This work, however, doesn't need this assumption: the main constraints on sets of transformations, states and effects are closure under sequential and parallel composition.

\subsection{Tomographic locality} \label{LT} 

Every transformation $T_s$ from $A$ to $B$ induces a linear map from $\bold{V_A}$ to $\bold{V_B}$, uniquely defined by
\begin{equation}\label{linearmap}
|\sigma_r)_{A}\in\bold{St(A)} \mapsto T_s |\sigma_r)_{A} \in \bold{St(B)},
\end{equation}
where $T_s |\sigma_r)_{A}$ is the state of type $B$, corresponding to composition of $T_s$ with $|\sigma_r)_A$. Without further assumptions, however, this map is in general \emph{not} sufficient to specify the transformation $T_s$. To see this,  consider the situation in which the transformation $T_s$ is applied to one half of a bipartite state $|\sigma)_{AC}$. The composition defines a bipartite state of type $BC$, which can be schematically represented $|\sigma')_{BC} = (T_s \otimes I_C) |\sigma)_{AC}$, with $I_C$ understood as an identity transformation (or the absence of any transformation) on system $C$. The action of $T_s$ on bipartite states of type $AC$ induces a linear map from $\bold{V_{AC}}$ to $\bold{V_{BC}}$. In general, however, there need be no simple relationship between this map, and the map above from $\bold{V_A}$ to $\bold{V_B}$. Indeed, there need not be any simple relationship between the vector space $\bold{V_{AC}}$ and the vector spaces for the individual systems, $\bold{V_A}$ and $\bold{V_C}$. For each possible system type $C$, this structure is ultimately specified by the theory, via the assignments of probabilities to circuit outcomes.\footnote{The operational content of Assumption~\ref{finitedim} is that there does at least exist a finite set of system types $C$, such that specification of the action of $T_s\otimes I_C$ on $\bold{V_{AC}}$ for each of the system types in this finite set is sufficient to characterise $T_s$.}

The representation of transformations in a generalised probabilistic theory is greatly simplified by the assumption of \emph{tomographic locality}. A theory satisfies tomographic locality if every transformation can be fully characterized by local process tomography. More formally, consider transformations $T^1_{s_1}$ and $T^2_{s_2}$, both of which have input type $A_1\cdots A_m$ and output type $B_1 \cdots B_n$. Consider circuit outcomes of the form
$$\begin{xy}
(47,1)*+++++++++++[0][F-:<3pt>]{\text{$T^i_{s_i}$}}="2";
(9,9)*{\text{$\sigma^1_{r_1}$}};
(12,-7);(33,-7)**\dir{-};
(12,9);(33,9)**\dir{-};
(23.7,-10)*{\text{$A_m$}};
(12,14);(12,4)**\dir{-};
(2,9);(12,14)**\dir{-};
(2,9);(12,4)**\dir{-};
(9,-7)*{\text{$\sigma^m_{r_m}$}};
(12,-2);(12,-12)**\dir{-};
(2,-7);(12,-2)**\dir{-};
(2,-7);(12,-12)**\dir{-};
(23.7,6)*{\text{$A_1$}};
(19,3)*{\text{.}};
(19,2)*{\text{.}};
(19,1)*{\text{.}};
(19,0)*{\text{.}};
(19,-1)*{\text{.}};
(19,-2)*{\text{.}};
(61,9);(82.1,9)**\dir{-};
(70.1,6)*{\text{$B_1$}};
(70.1,-10)*{\text{$B_n$}}; 
(74.8,3)*{\text{.}};
(74.8,2)*{\text{.}};
(74.8,1)*{\text{.}};
(74.8,0)*{\text{.}};
(74.8,-1)*{\text{.}};
(74.8,-2)*{\text{.}};
(85.8,9)*{\text{$\lambda^1_{t_1}$}};
(82.1,14);(82.1,4)**\dir{-};
(93.1,9);(82.1,14)**\dir{-};
(93.1,9);(82.1,4)**\dir{-};
(85.8,-7)*{\text{$\lambda^m_{t_n}$}};
(82.1,-12);(82.1,-2)**\dir{-};
(93.1,-7);(82.1,-12)**\dir{-};
(93.1,-7);(82.1,-2)**\dir{-};
(61,-7);(82.1,-7)**\dir{-};
\end{xy}$$ 
with corresponding probability $P^i(r_1\ldots r_m,t_1\ldots t_n, s_i)$, where $i\in \{1,2\}$. Tomographic locality states that for all transformations $T^1_{s_1}$ and $T^2_{s_2}$ with matching input and output types, if 
\[
P^1(r_1\ldots r_m,t_1\ldots t_n, s_1) = P^2(r_1\ldots r_m,t_1\ldots t_n, s_2) \quad \forall |\sigma^1_{r_1}),\ldots , |\sigma^n_{r_m}), (\lambda^1_{t_1}|,\ldots , (\lambda^n_{t_n}| 
\]
then
\[
T^1_{s_1} = T^2_{s_2}.
\]

The whole of the rest of this work adopts
\begin{assumption}\label{tomloc}
Tomographic locality is satisfied.
\end{assumption}
A consequence of tomographic locality is that for a transformation with input type $AB$ and output type $CD$, the corresponding real vector space has the form \cite{Jon, Pavia1,Pavia2} ,
\[
\bold{V^{AB}_{CD}} \cong \bold{V^A} \otimes \bold{V^B} \otimes \bold{V_C} \otimes \bold{V_D},
\]
where $\otimes$ here denotes the ordinary vector space tensor product (as opposed to the symbolic $\otimes$ used above to denote parallel composition). In particular, for a bipartite state of type $AC$, the corresponding vector space $\bold{V_{AC}} \cong \bold{V_A} \otimes \bold{V_C}$. Furthermore, a transformation $T_s\in\bold{Transf(A,B)}$ is completely specified by its action on $\bold{St(A)}$, hence $T_s$ can be identified with the linear map defined by Eq.~(\ref{linearmap}). When $T_s$ acts on part of a bipartite state of type $AC$, the induced linear map $\bold{V_{AC}}\rightarrow \bold{V_{BC}}$ is given by $T_s\otimes I_C$, where again, the symbol $\otimes$ represents the ordinary vector space tensor product, and $I_C$ is now the identity operator on the vector space $\bold{V_C}$. In view of Assumptions~\ref{finitedim} and \ref{tomloc}, the symbol $\otimes$ will from here on denote the ordinary tensor product of finite dimensional vector spaces.  

Fixing a basis for each system type, a transformation $T$ with input $AB$ and output $CD$ can be written as a matrix
$$T = \sum_{i,j,k,l} M_{ij,kl}\big(\alpha^{A}_i\otimes\alpha^{B}_j\otimes\alpha^{C}_k\otimes\alpha^{D}_l\big),$$ 
where $M_{ij,kl}\in\mathbb{R}$, $\{\alpha^{A}_i\}$, $\{\alpha^{B}_j\}$ are bases for $\bold{V^A}$ and $\bold{V^B}$ respectively, and $\{\alpha^C_l\}$, $\{\alpha^D_m\}$ are bases for $\bold{V_C}$ and $\bold{V_D}$ respectively. The probability associated with a circuit outcome, e.g., of the form of Fig.~(\ref{circuit outcome}), can be written
\[
M^3_{r_3} . (M^2_{r_2} \otimes I_C) . M^1_{r_1},
\]
where $M^1_{r_1}$ (a column vector) is the matrix form of the transformation corresponding to the event $\mathcal{E}_{r_1}$, $M^2_{r_2}$ corresponds to $\mathcal{F}_{r_2}$, and $M^3_{r_3}$ (a row vector) corresponds to $\mathcal{G}_{r_3}$. 
 
\subsection{Causality}\label{causality}  

A nice feature of the Pavia-Hardy framework we have described is that a basic assumption of causality is not implicit, but can be articulated explicitly and theories considered that do not satisfy this assumption. A generalised probabilistic theory is said to be \emph{causal} if the marginal probability of a preparation event is independent of the choice of which measurement follows the preparation. More formally, if $\{|\sigma_i)\}_{i\in{X}}\subset\bold{St(A)}$ are the states corresponding to a preparation test, consider the probability of outcome $i$, given that a subsequent measurement $\mathcal{E}$ corresponds to a set of effects $\{ {(\lambda_j|\}_{j\in{Y}}}$:
\[
P(i | \mathcal{E} ):=\sum_{j\in{Y}}(\lambda_j|\sigma_i).
\]
The theory is causal if for any system type $A$, any preparation test with outcome $i$, and any pair of measurements, $\mathcal{E}$ and $\mathcal{F}$, with input type $A$, 
\[
P(i|\mathcal{E})=P(i|\mathcal{F}).
\]
Note that the causality assumption is logically independent from tomographic locality: generalised probabilistic theories satisfying one or both or neither can be defined.

If circuits are thought of as having a temporal order, with tests later in the sequence occurring at a later time than tests earlier in the sequence, then the assumption of causality captures the intuitive notion of \emph{no signalling from the future}. It was shown in \cite{Pavia1} that a generalised probabilistic theory is causal if and only if for every system type $A$, there is a unique deterministic effect $_A(u|$. In this case, a measurement, with corresponding effects $\{ {(\lambda_j|\}_{j\in{Y}}}$, satisfies $\sum_j { (\lambda_j |} = {(u|}$. A state $|\sigma)$ is normalised if and only if ${(u|\sigma)} = 1$. The causality assumption also implies \cite{Pavia1} a \emph{no-signalling} principle for the states of the theory. That is, in a causal theory, if a test is performed on the $A$ part of a composite system of type $AB$, then it is not possible to get information about which test was performed by only performing a test on the $B$ part. (For an interesting extension of this idea to arbitrary causal networks, corresponding to circuits in the Pavia-Hardy framework, see \cite{hlp}.)

Although the idea of \emph{no-signalling from the future} seems intuitive, there is nothing obviously pathological about generalised probabilistic theories that do not satisfy the causality assumption, as long as one does not try to define adaptive circuits, wherein a choice of later test can depend on an earlier outcome. Indeed there is nothing about  the framework as it stands that forces an interpretation of the circuits described as a sequence of tests applied in a temporal order matching the order of tests in the circuit. Perhaps an entire closed circuit is set up in advance, and the pointers attain their final resting positions together, when a ``go'' button is pressed. Remarkably, the majority of the results derived in this work do not require the causality assumption, hence: \emph{except where explicitly stated, causality is not assumed in what follows}. 

\subsection{Examples}


As already noted, quantum theory can be formulated as a generalised probabilistic theory in the above framework, with finite dimensional quantum theory satisfying Assumption~\ref{finitedim}. Quantum theory satisfies the causality assumption, as the probability of an event cannot depend on the choice of a measurement that is subsequently performed on the system. For a system associated with Hilbert space $H$, the unique deterministic effect, guaranteed to exist in a theory satisfying the causality assumption, is simply the identity operator $\mathbb{I}$ on $H$. For a system of type $A$, the vector space $\bold{V_A}$ is the real vector space of Hermitian operators, spanned by the density matrices. It is well known that quantum theory satisfies the assumption of tomographic locality. This follows from the way in which systems combine to form composite systems: a joint state is a positive operator acting on the tensor product of the Hilbert spaces associated with the individual systems. One can then check that the real vector spaces of Hermitian operators satisfy $\bold{V_{AB}} \cong \bold{V_A}\otimes \bold{V_B}$. 

The framework presented is also general enough to accommodate the basic classical theory of finite dimensional probability distributions and stochastic processes, as well as probabilistic theories different from either quantum or classical theory. The latter include ``box world'' \cite{Jon, DensePR}, a causal theory allowing for arbitrarily strong nonlocal correlations, such as the PR box correlations of Popescu and Rohrlich \cite{box} that maximally violate the CHSH inequality. Quantum theory defined over real, rather than complex, Hilbert spaces supplies an example of a theory that does not satisfy tomographic locality. See also \cite{cause} for an explicit construction that does not satisfy the causality assumption.

\section{Computation in generalised probabilistic theories}\label{compmodel}

\subsection{Uniform circuits}   

The last section showed that in a generalised probabilistic theory, one can draw circuits representing the connections of physical devices in an experiment, and the specific events that took place in the experiment. These circuits provide a natural model of computation, based on the classical and quantum circuit models. A good notion of \emph{efficient} computation needs a definition of a \emph{uniform family of circuits} in a generalised probabilistic theory. 

In the standard, classical or quantum, circuit model, a circuit family $\{C_n\}=\{C_1, C_2, \dots\}$ consists of a sequence of circuits, each indexed by a positive integer $n$, denoting the input system size, where $C_n$ is the circuit corresponding to a problem instance of size $n$. In a poly-size circuit family, the number of gates in $C_n$ is bounded by a polynomial in $n$, and the circuit family is uniform if a Turing machine can output a description of $C_n$ in time bounded by a polynomial in $n$.

In a generalised probabilistic theory, there is no reason to assume that a circuit must have the form of a number of gates acting on some input, where the input preparation encodes the problem instance -- recall that we do not necessarily assume that the generalised probabilistic theory satisfies the causality assumption, in which case a circuit does not have a preferred direction. Instead, we allow the entire circuit to encode the problem instance, defining a circuit family as a set $\{C_x\}$ such that each circuit is indexed by a classical string $x=x_1x_2\dots x_n$. A circuit family is poly-size if the number of gates is bounded by a polynomial in $|x|$. For a particular generalised probabilistic theory it might not be the case that bipartite and single system transformations together are universal for computation, as they are in classical and quantum computation. Hence for any $k,l$, a circuit might involve gates with $k$ input systems and $l$ output systems. In general, it might be the case that no finite gate set is universal for computation. Nonetheless, we will impose as a requirement of uniformity that any uniform circuit family is associated with a finite gate set,\footnote{For a uniformity condition where the size of the gate set grows with circuit size, see \cite{NdB}.} such that each circuit in the family is built from elements of that set. It follows that the number of distinct system types appearing in a uniform circuit family is also finite.  

A further requirement for a circuit family to be uniform takes the form of a constraint on the entries of the matrices representing the transformations that appear in the finite gate set -- otherwise, it may be possible to smuggle hard to compute quantities into the computation. There must exist some fixed choice of basis of $\bold{V_A}$ for each system $A$, such that a Turing machine can efficiently compute approximations to the entries of the matrices relative to these bases. We require that for any matrix entry $({M})_{ij}$, and any $\epsilon$, a Turing machine can output a rational number, within $\epsilon$ of $({M})_{ij}$, in time bounded by a polynomial in $\log(\frac{1}{\epsilon})$. This is physically reasonable, since gates are supposed to represent operational devices, and it makes sense to assume that an experimenter with access to devices governed by some generalised probabilistic theory cannot align, or employ, them with arbitrary accuracy.

Finally, for a circuit family $\{C_x\}$ to be uniform, there must be a Turing machine that, acting on input $x$, outputs a classical description of $C_x$ in time bounded by a polynomial in $|x|$. 

The notion of a poly-size uniform circuit family $\{C_x\}$ can be summarised as follows:
\begin{itemize} 
\item The number of gates in the circuit $C_x$ is bounded by a polynomial in $|x|$.
\item There is a finite gate set $\mathcal{G}$, such that each circuit in the family is built from elements of $\mathcal{G}$.
\item For each type of system, there is a fixed choice of basis, relative to which transformations are associated with matrices. Given the matrix ${M}$ representing (a particular outcome of) a gate in $\mathcal{G}$, a Turing machine can output a matrix $\widetilde{{M}}$ with rational entries, such that $ | ({M} - \widetilde{{M}})_{ij} | \leq \epsilon$, in time polynomial in $\log(1/\epsilon)$. 
\item There is a Turing machine that, acting on input $x=x_1x_2\dots x_n$, outputs a classical description of $C_x$ in time bounded by a polynomial in $|x|$.
\end{itemize}

\subsection{Acceptance criterion} \label{accept}

Now that we have defined a uniform family of circuits, we need to discuss the acceptance criterion. In quantum computation it is known that performing intermediate measurements during the computation does not increase the computational power. So, without loss of generality, all measurements can be postponed until the end of the computation. A quantum computer can be defined to accept an input string $x$ if the outcome of a computational basis measurement on the first outcome qubit is $\ket{0}$. 
In a general theory, it need not be the case that all measurements can be postponed until the end of the computation without loss of generality, hence the acceptance criterion should reflect this.

The way in which a generalised probabilistic theory solves a problem might be imagined as follows. First, given the input string $x$, the circuit $C_x$ is designed and built by composing gates from the fixed finite gate set sequentially and in parallel according to the description. 
An example of such a circuit is depicted below.
$$\begin{xy}
(24,7)*++[0][F-:<3pt>]{\text{$\{T^3_{r_3}\}$}}="1"; 
(47,1)*++++++[0][F-:<3pt>]{\text{$\{T^4_{r_4}\}$}}="2";
(70.6,7)*++[0][F-:<3pt>]{\text{$\{T^5_{r_5}\}$}}="3"; 
(70.6,-5)*++[0][F-:<3pt>]{\text{$\{T^6_{r_6}\}$}}="4";
"1";(12.8,7)**\dir{-};
(9,7)*{\text{$\{\sigma_{r_1}\}$}};
(12.8,-5);(36,-5)**\dir{-};
(24,-8)*{\text{$C$}};
(12.8,12);(12.8,2)**\dir{-};
(2,7);(12.8,12)**\dir{-};
(2,7);(12.8,2)**\dir{-};
(9,-5)*{\text{$\{\rho_{r_2}\}$}};
(12.8,0);(12.8,-10)**\dir{-};
(2,-5);(12.8,0)**\dir{-};
(2,-5);(12.8,-10)**\dir{-};
(15.5,4)*{\text{$A$}};
"1";(36,7)**\dir{-};
(58,7);(64.2,7)**\dir{-};
(61.3,4)*{\text{$D$}};
(80,4)*{\text{$F$}};
(80,-8)*{\text{$G$}};
(61.3,-8)*{\text{$E$}};
(33,4)*{\text{$B$}};
(87.7,7)*{\text{$\{\lambda_{r_7}\}$}};
"3";(84,7)**\dir{-};
(84,12);(84,2)**\dir{-};
(95,7);(84,12)**\dir{-};
(95,7);(84,2)**\dir{-};
(87.7,-5)*{\text{$\{\chi_{r_8}\}$}};
"4";(84,-5)**\dir{-};
(84,-10);(84,0)**\dir{-};
(95,-5);(84,-10)**\dir{-};
(95,-5);(84,0)**\dir{-};
(58,-5);(64.2,-5)**\dir{-};
\end{xy}$$
Once the circuit is built, the computation can be run. At the end of a run, each gate has a classical outcome associated with it, 
where the theory defines a joint probability for these outcomes. For the example above, the joint probability is given by
$$P(r_1, \dots ,r_8)=(\chi_{r_8}|(\lambda_{r_7}|\big(T^6_{r_6}\otimes T^5_{r_5}\big)T^4_{r_4}\big(T^3_{r_3}\otimes I_C\big)|\rho_{r_2})|\sigma_{r_1}).$$
Denoting the string of observed outcomes by $z=r_1\ldots r_8$, the final output of the computation will be given by a function $a(z)\in \{0,1\}$, where there must exist a Turing machine that computes $a$ in time polynomial in the length of the input $|x|$. The probability that a computation accepts the input string $x$ is therefore given by
$$P_x({\mathrm{accept}})=\sum_{z|a(z)=0}P(z),$$ where the sum ranges over all possible outcome strings of the circuit $C_x$.

\subsection{Efficient computation} \label{unif} 

The class of problems that can be solved efficiently in a generalised probabilistic theory can be defined as follows.
\begin{define}
For a generalised probabilistic theory $\bold{G}$, a language $\mathcal{L}$ is in the class $\bold{BGP}$ if there exists a poly-sized uniform family of circuits in $\bold{G}$, and an efficient acceptance criterion, such that
\begin{itemize}
\item $x\in\mathcal{L}$ is accepted with probability at least $\frac{2}{3}$. 
\item $x\notin\mathcal{L}$ is accepted with probability at most $\frac{1}{3}$. \end{itemize}  
\end{define} 
As ever, the choice of the constant $2/3$ is arbitrary. Any fixed constant $k$, $1/2 <k < 1$ would serve equally well.\footnote{Note that each uniform circuit (with an efficient acceptance condition) defines a random variable that maps circuit outcomes to the set $\{\textbf{accept,} \ \textbf{reject}\}$ and so one can regard multiple repetitions of a computation as a collection of i.i.d random variables (independence follows from the definition of the probabilistic structure; specifically that the sequential or parallel composition of two events corresponding to outcomes of closed circuits define independent probability distributions). This fact is independent of the form of a particular theory and so holds true for all theories in the framework. Taking this fact in conjunction with the definition of $\textbf{BGP}$ and applying the Chernoff bound provides the required result. See \cite[p.154]{Nielsen} for more discussion of the quantum case.}  

For a specified $\bold{G}$, the class $\bold{BGP}$ is the natural analogue of $\bold{BPP}$ for probabilistic classical computation, and $\bold{BQP}$ for quantum computation. Indeed, $\bold{BGP}$ reduces to $\bold{BPP}$ or $\bold{BQP}$ in the case that the theory $\bold{G}$ is in fact the classical or quantum theory. See, e.g., \cite{kit} for a proof that quantum circuits with mixed states and CP maps are equivalent in computational power to standard quantum circuits with pure states and unitary transformations.

Note that the way in which the acceptance criterion is defined implies that $\bold{P}\subseteq\bold{BGP}$, for (almost) every generalised probabilistic theory $\bold{G}$. This is a consequence of the fact that the final output is a function $a(z)$ of the string of observed events, and the only constraint on $a$ is that it can be efficiently computed by a Turing machine. Degenerate cases provide exceptions to this -- consider, e.g., any theory such that all transformations are deterministic, i.e., the outcome set of any circuit is the singleton set. One could remove these degenerate cases by generalising the acceptance function $a(.)$ so that it depend on both the outcome string $z$ and the input string $x$. Of course, the fact that $\bold{P}\subseteq\bold{BGP}$ does not have much to do with the intrinsic computational power of a GPT, but is an artefact of the acceptance criterion -- it might be interesting to weaken this criterion so that computation in theories intrinsically weaker than classical can be explored. 

\subsection{Upper bounds on computational power}\label{upperboundscomppower}  

Using the above definitions of uniform circuit families, and acceptance of an input, the following upper bound on the computational power of any generalised probabilistic theory can be obtained. The main assumption -- in addition to those involved in uniformity -- is that tomographic locality holds. The result does not require the causality assumption.
\begin{thm} \label{5}
For any generalised probabilistic theory  $\bold{G}$ satisfying tomographic locality, $\bold{BGP}\subseteq\bold{AWPP}\subseteq\bold{PP}\subseteq\bold{PSPACE}$.
\end{thm}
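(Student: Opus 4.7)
The inclusions $\bold{AWPP}\subseteq\bold{PP}\subseteq\bold{PSPACE}$ are standard classical-complexity results, so it suffices to prove $\bold{BGP}\subseteq\bold{AWPP}$. Fix a language $\mathcal{L}\in\bold{BGP}$ decided by a uniform family $\{C_x\}$ with acceptance function $a$, so that $P_x(\mathrm{accept}) = \sum_{z:\,a(z)=0} P(z)$, with $z$ ranging over circuit outcome strings.

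By tomographic locality (Assumption~\ref{tomloc}), once the fixed bases of the finitely many system types appearing in the family have been chosen, each gate outcome $T^i_{r_i}$ is represented by a real matrix $M^i_{r_i}$. The probability $P(z)$ is then computed by contracting the tensor network formed from the $M^i_{r_i}$ (with identities on idle wires), and expanding in the fixed bases yields
\[
P(z) = \sum_{\iota} \prod_{i=1}^{\mathrm{poly}(|x|)} (M^i_{r_i})_{\iota(i)},
\]
a sum over basis-index paths $\iota$ on the internal wires. Because every matrix has dimension bounded by a constant $D$ (as the gate set $\mathcal G$ is finite), the number of paths is at most $D^{\mathrm{poly}(|x|)} = 2^{\mathrm{poly}(|x|)}$.

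By the uniformity condition, each entry $(M^i_{r_i})_{\iota(i)}$ is approximable to additive precision $2^{-r(|x|)}$ in time $\mathrm{poly}(r,|x|)$, and is bounded in modulus by a universal constant $K$ (again by finiteness of $\mathcal G$). Choose $r$ large enough that the total error accumulated through a polynomial-length product and through the exponential sums over $\iota$ and $z$ stays below, say, $1/12$. Multiplying each rational approximation through by $2^{r(|x|)}$ yields signed integers of polynomially bounded bit length, and the approximate acceptance probability takes the form $\widetilde P_x(\mathrm{accept}) = f(x)/2^{p(|x|)}$, where $p$ is a polynomial absorbing all scaling factors and $f(x)$ is an exponential sum over the polynomial-length witnesses $(z,\iota)$ of products of polynomially many signed integer approximations. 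The standard closure of $\mathrm{GapP}$ under exponential sums indexed by polynomial-length strings and under products of polynomially many GapP functions (with signs absorbed into the canonical difference-of-$\#\mathrm{P}$ form) places $f\in\mathrm{GapP}$. The $\bold{BGP}$ gap on $P_x(\mathrm{accept})$, shifted by the $1/12$ approximation error, then supplies the gap required by the definition of $\bold{AWPP}$ (amplification can sharpen the constants if desired).

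The main obstacle I anticipate is the approximation-error bookkeeping, because in a general GPT the matrix entries are not \emph{a priori} bounded by $1$ as in the unitary quantum case. The argument relies crucially on the finiteness of the gate set to supply the uniform bound $K$ and on the fact that only finitely many system-type dimensions appear in a uniform family; given these, the approximation-precision polynomial $r$ can be chosen to dominate the combined polynomial depth of the tensor network and the $\mathrm{poly}(|x|)$ bit-length of the $2^{\mathrm{poly}(|x|)}$ summands, keeping the propagated error under control. Once this is handled, the remainder of the argument is essentially the Fortnow--Rogers proof of $\bold{BQP}\subseteq\bold{AWPP}$ cast in theory-independent form, as the preceding discussion anticipates; in particular, no use is made of causality.
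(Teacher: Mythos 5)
Your proposal is correct and follows essentially the same route as the paper: a Feynman-path expansion of the acceptance probability into an exponential sum of products of polynomially many efficiently-approximable matrix entries, error control via the finiteness of the gate set (the paper's Lemmas~\ref{1} and~\ref{2} and Proposition~\ref{3} supply exactly the uniform bound you call $K$ and the propagated-error estimate), and the closure of $\bold{GapP}$ under exponential sums and polynomial products to place the approximate amplitude in the form $f(x)/2^{p(|x|)}$. The only detail you gloss over is that Definition~\ref{AWPP} also demands $0\leq f(x)/2^{r(|x|)}\leq 1$, which the approximating amplitude can violate slightly when $P_x(\mathrm{accept})$ is near $0$ or $1$; the paper secures this by first padding the circuit with a biased coin so that $1/10\leq P_x(\mathrm{accept})\leq 9/10$ before approximating.
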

Here, $\bold{PSPACE}$ consists of those problems that, roughly speaking, can be solved by a classical computer using a polynomial amount of memory. $\bold{PP}$ stands for Probabilistic Polynomial time, which roughly speaking, contains those problems that can be solved by a probabilistic classical computer that must get the answer right with probability $>1/2$. The probability does not need to be bounded away from $1/2$, indeed may be greater than $1/2$ only by an exponentially small amount, hence $\bold{PP}$ contains problems that are not thought to be efficiently solvable by a classical random computer. $\bold{AWPP}$ stands for Almost-Wide Probabilistic Polynomial time, and it is known that $\bold{AWPP}\subseteq \bold{PP}$. The best known upper bound for the class of efficient quantum computations similarly states that $\bold{BQP} \subseteq \bold{AWPP}$. 

To define the class $\bold{AWPP}$, the notion of a $\bold{GapP}$ function must be introduced. Given a polynomial-time non-deterministic Turing machine $M$ and input string $x$, denote by $M_{acc}(x)$ the number of accepting computation paths of $M$ given input $x$, and by $M_{rej}(x)$ the number of rejecting computation paths of $M$ given $x$. A function $f:\{0,1\}^{*}\rightarrow\mathbb{Z}$ is a $\bold{GapP}$ function if there exists a polynomial-time non-deterministic Turing machine $M$ such that $f(x)=M_{acc}(x)-M_{rej}(x)$ for all input strings $x$. The class $\bold{AWPP}$ can be defined as follows \cite{Fenner}.
\begin{define} \label{AWPP}
The class $\bold{AWPP}$ consists of those languages $\mathcal{L}$ such that there exists a $\bold{GapP}$ function $f$, and a polynomial $r$ such that 
\begin{itemize}
\item If $x\in\mathcal{L}$ then ${2}/{3}\leq {f(x)}/{2^{r(|x|)}}\leq 1,$ 
\item if $x\notin\mathcal{L}$ then $0\leq {f(x)}/{2^{r(|x|)}}\leq{{1}/{3}}$.
\end{itemize}
\end{define} 
Once the appropriate definitions for generalised probabilistic theories are in place, the proof of Theorem~\ref{5} is a fairly straightforward extension of similar proofs for the quantum case, and is presented in Appendix~\ref{first}. 

Although formal proofs are relegated to appendices, it is useful to sketch the proof that $\bold{BGP}\subseteq\bold{PSPACE}$ in order to provide intuition about how the physical principles underlying generalised probabilistic theories lead to computational bounds.
\begin{proof}[Sketch proof.]
Consider a general circuit ${C}_T$, with $q(|T|)$ gates. Tensoring these gates with identity transformations on systems on which they do not act, and padding them with rows and columns of zeros, results in a sequence of square matrices $M^{r_q, q}, \dots , M^{r_1, 1}$, where $M^{r_n,n}$ is the matrix representing the $r_n^{\mathrm{th}}$ outcome of the $n^{\mathrm{th}}$ gate. This can be done in such a way that the probability for outcome $z=r_1\dots r_q$, is given by
$$
b^T {M}^{r_{q},q}\cdots{M}^{r_2, 2}M^{r_1,1} b
=\sum_{\{i_1,\dots,i_{q-1}\}}M^{r_{q}, q}_{1i_{q-1}}\cdots{M}_{i_{2}i_{1}}^{r_2, 2}M^{r_1, 1}_{i_11} 
$$
where $b$ is the vector $b=(1,0,\dots,0)$ and $b^T$ is its transpose. The output probability is a sum of exponentially many terms, but each term is a product of polynomially many numbers, each of which can be efficiently calculated. So a classical Turing machine can calculate each term in the sum, one after the next, keeping a running total. This requires only polynomial-sized memory.
\end{proof} 

This proof relies on the ability to decompose the acceptance probability of the computation in a form reminiscent of a (discrete) Feynman path integral. This is a consequence of the fact that transformations in a generalised probabilistic theory are linear, and thus have a matrix representation. It is pertinent then to ask where this linearity comes from. When we introduced generalised probabilistic theories in Section~\ref{intro}, we associated states (respectively, effects) with functions taking effects (respectively, states) to probabilities. As one can take linear combinations of such functions, this induces a linear structure on the set of states (respectively, effects). Thus the linear structure of generalised probabilistic theories arises from the requirement that a physical theory should be able to give probabilistic predictions about the occurrence of possible outcomes. 

Aside from linearity, a further requirement of the proof is the ability to compute efficiently the entries in the matrices representing the transformations applied in parallel in a specific circuit. Section~\ref{LT} noted that in a theory satisfying tomographic locality, a transformation $\mathcal{E}\in\bold{Transf(A,B)}$ is completely specified by its action on $\bold{St(A)}$, and so the matrix representing transformations applied in parallel can be easily calculated by taking the tensor product of the matrices representing each individual transformation. This is not the case in a theory without tomographic locality, where the tensor product structure disappears. If a transformation from $\bold{A}$ to $\bold{B}$ acts on one half of a system $\bold{AC}$, there may be no simple way to relate the linear map $\bold{\mathrm{St}(AC)}\rightarrow\bold{\mathrm{St}(BC)}$ to the action of the transformation when it is applied to a system $\bold{A}$ on its own, or indeed to a joint system $\bold{AC'}$. There may therefore be no efficient way of computing matrix elements corresponding to a transformation considered as part of a circuit of arbitrary size. An interesting direction for future work might be to weaken the assumption of tomographic locality such that the results still go through. Real Hilbert space quantum theory provides an example of a theory without tomographic locality for which the above bounds hold, since there is an efficient way of calculating relevant matrix entries.

\section{Post-selection and generalised probbilisitic theories} \label{post}   

In \cite{Post} Aaronson introduced the notion of \emph{post-selected} quantum circuits. These are quantum circuits which, in addition to having a specified qubit, on which a computational basis measurement will be made to provide the outcome, have an additional qubit on which a measurement can be performed such that we can post-select on the outcome. Instead of sampling the measurement result $r$ directly from the computational outcome qubit according to the distribution $P(r)$, only those runs of the computation are counted for which a measurement on the post-selected qubit yields the outcome $s=0$. The outcome distribution for the computation is taken to be the conditional distribution $P(r|s=0)$. 
An extra technical condition is needed, which is that there exists a constant $D$ and polynomial $w$ such that $P(S=0)\geq 1/D^{w(|x|)}$, i.e., we can only post-select on at most exponentially-unlikely outcomes.\footnote{This extra condition was missing from Aaronson's original paper on $\bold{PostBQP}$, but is needed for the definition of $\bold{PostBQP}$ to be independent of a choice of quantum gate set; see Section~2.5 of \cite{Greg}. We thank Scott Aaronson for some very interesting discussions concerning this point.}
\begin{define}
A language $\mathcal{L}$ is in the class $\bold{PostBQP}$ if there is a polynomially-sized uniform quantum circuit family, where each circuit has a computational outcome qubit and a post-selected qubit, such that when computational basis measurements are performed on these qubits, with respective outcomes $r$ and $s$,
\begin{itemize}
\item There exists a constant $D$ and polynomial $w$ such that $P(s=0)\geq 1/D^{w(|x|)}$
\item If $x\in\mathcal{L}$ then $P(r=0|s=0)\geq\frac{2}{3}$
\item If $x\notin\mathcal{L}$ then $P(r=0|s=0)\leq\frac{1}{3}$
\end{itemize}  
\end{define}
Aaronson showed in \cite{Post} that $\bold{PostBQP}=\bold{PP}$. Combining this with Theorem~\ref{5} gives
\begin{thm}
For any generalised probabilistic theory $\bold{G}$, $\bold{BGP}\subseteq\bold{PostBQP}$.
\end{thm}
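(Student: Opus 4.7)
The proof is essentially a one-line chaining of two already-available results, so my plan is to simply assemble the inclusions transitively rather than construct anything new.

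First, I would invoke Theorem~\ref{5}, which the paper has already established for any GPT $\bold{G}$ satisfying tomographic locality (and tomographic locality is in force throughout the paper via Assumption~\ref{tomloc}, so the qualifier ``any $\bold{G}$'' in the statement implicitly carries this condition). This gives immediately that $\bold{BGP} \subseteq \bold{AWPP} \subseteq \bold{PP}$. So any language decidable by an efficient $\bold{G}$-computation is, in particular, decidable in $\bold{PP}$.

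Second, I would cite Aaronson's characterisation from \cite{Post}, namely $\bold{PostBQP} = \bold{PP}$. Composing this equality with the inclusion above yields $\bold{BGP} \subseteq \bold{PP} = \bold{PostBQP}$, which is the desired statement. No further manipulation is required: the post-selected quantum model is strong enough to simulate $\bold{PP}$, and any tomographically local GPT computation lands inside $\bold{PP}$, so in particular it can be simulated by a post-selected quantum computation.

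There is effectively no obstacle here; the content of the theorem is entirely carried by Theorem~\ref{5} (whose proof is the real work, deferred to the appendix) and by Aaronson's theorem (cited as a black box). The only thing I would want to be careful about when writing the proof is to note explicitly that tomographic locality is required for Theorem~\ref{5} to apply, so that the reader does not read the statement as holding literally for \emph{all} GPTs irrespective of that standing assumption. Beyond this bookkeeping remark, the proof is a two-line transitive composition.
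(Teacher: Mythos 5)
Your proposal is correct and matches the paper's own argument exactly: the theorem is obtained by combining Theorem~\ref{5} ($\bold{BGP}\subseteq\bold{AWPP}\subseteq\bold{PP}$) with Aaronson's result that $\bold{PostBQP}=\bold{PP}$, and your remark about the implicit tomographic locality assumption is consistent with the paper's standing Assumption~\ref{tomloc}. Nothing further is needed.
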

Roughly speaking, a post-selecting quantum computer can simulate computation in any other theory satisfying tomographic locality. One can also define a notion of generalised circuits with post-selection on at most exponentially-unlikely outcomes. These are poly-sized uniform circuits in a generalised probabilistic theory, where the probability of acceptance is conditioned on the circuit outcome $z$ lying in a (polytime computable) subset of all possible values of $z$. Defining the class $\bold{PostBGP}$ in the obvious way, one then obtains
\begin{thm} \label{yeow}
For any generalised probabilistic theory $\bold{G}$, $\bold{PostBGP}\subseteq\bold{PP}$.
\end{thm}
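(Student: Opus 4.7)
The plan is to adapt Aaronson's proof that $\bold{PostBQP}\subseteq\bold{PP}$, replacing quantum amplitudes with the Feynman-path-sum / matrix-entry representation that was already the engine of Theorem~\ref{5}. Recall that the crux of Theorem~\ref{5}'s proof (under tomographic locality) is that for any uniform GPT circuit, the probability of any circuit outcome can be written as a sum of exponentially many terms, each a product of polynomially many matrix entries drawn from the finite gate set, and each matrix entry efficiently approximable to any inverse-exponential precision. Clearing denominators, these probabilities can be represented (up to controlled error) by $\bold{GapP}$ functions. What one gains from Aaronson's comparison trick is a reduction of the post-selected acceptance condition to the sign of a single $\bold{GapP}$ function, whose sign-testing is exactly the defining problem of $\bold{PP}$.

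Concretely, write $a:=P(r=0,s=0)$ and $b:=P(s=0)$, so that the $\bold{PostBGP}$ conditions read $a/b\geq 2/3$ (accept) and $a/b\leq 1/3$ (reject). Each of $a$ and $b$ is a sum of $P(z)$ over a polytime-recognisable subset of outcome strings $z$, and, by the same argument used for Theorem~\ref{5}, each $P(z)$ expands as a path sum over products of matrix entries. Approximating every matrix entry to $p(|x|)$ bits (using uniformity of the gate set) and clearing a common denominator $N=2^{r(|x|)}$ gives integer-valued quantities $\widetilde{f}(x)$, $\widetilde{g}(x)$ with $\widetilde{f}/N\approx a$ and $\widetilde{g}/N\approx b$; the path sum structure lets us realise $\widetilde{f}$ and $\widetilde{g}$ as differences of the accepting and rejecting path-counts of a polynomial-time nondeterministic Turing machine, i.e.\ as $\bold{GapP}$ functions. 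Since $\bold{GapP}$ is closed under integer linear combinations, $h:=6\widetilde{f}-3\widetilde{g}$ is again $\bold{GapP}$.

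The final step is to check that $\mathrm{sgn}\,h(x)$ decides $\mathcal{L}$. Algebraically, if $x\in\mathcal{L}$ then $6a-3b\geq b$, and if $x\notin\mathcal{L}$ then $6a-3b\leq -b$; in either case the exact quantity has magnitude at least $b\geq 1/D^{w(|x|)}$, using the extra hypothesis in the definition of $\bold{PostBGP}$. Choosing the approximation precision $p$ large enough (e.g.\ $p(|x|)=w(|x|)\log_2 D+cq(|x|)$ for a polynomial $q$ bounding the number of gates and a small absolute constant $c$) forces the cumulative round-off error in $h/N$ to be much smaller than this gap, so $h(x)>0$ iff $x\in\mathcal{L}$. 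By the Fenner--Fortnow--Kurtz characterisation, deciding the sign of a $\bold{GapP}$ function is in $\bold{PP}$, and hence $\mathcal{L}\in\bold{PP}$.

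The main obstacle is the error analysis tying together the three polynomials involved: the circuit size $q(|x|)$, the post-selection exponent $w(|x|)$, and the approximation precision $p(|x|)$. One must verify that truncating each of the $O(q(|x|))$ matrix entries appearing in every path, and summing over the $2^{O(q(|x|))}$ paths, produces a bounded total error that is still comfortably smaller than $1/D^{w(|x|)}$; this is what forces $p$ to depend on both $q$ and $w$. Once this bookkeeping is in place, the rest is a direct transcription of Aaronson's argument, with the quantum amplitude structure replaced by the linear, tomographically local structure supplied by Assumption~\ref{tomloc}.
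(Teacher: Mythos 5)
Your proposal is correct and follows essentially the same route as the paper: both rest on the $\bold{GapP}$ path-sum representation of the approximating amplitudes (Theorem~\ref{4}), with the exponentially-unlikely post-selection bound and Proposition~\ref{3} controlling the round-off error. The only (cosmetic) difference is the last step: you cross-multiply explicitly and test the sign of the single $\bold{GapP}$ function $6\widetilde{f}-3\widetilde{g}$, whereas the paper keeps the numerator and denominator separate and invokes the Li--Fenner characterisation of $\bold{PP}$ as languages for which a ratio of two $\bold{GapP}$ functions is bounded in $[2/3,1]$ versus $[0,1/3]$ --- these are equivalent, since that characterisation is itself proved by your cross-multiplication trick.
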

The proof is in Appendix~C. Combining this with Aaronson's result yields
\begin{corr} \label{aaronson}
For any generalised probabilistic theory $\bold{G}$, $\bold{PostBGP}\subseteq\bold{PostBQP}$. 
\end{corr}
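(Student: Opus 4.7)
The plan is straightforward: Corollary~\ref{aaronson} follows by chaining Theorem~\ref{yeow} with Aaronson's characterisation of $\bold{PostBQP}$, so essentially no new work is needed beyond assembling two facts that are already in place.

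First I would invoke Theorem~\ref{yeow}, which tells us that for any generalised probabilistic theory $\bold{G}$ satisfying tomographic locality, $\bold{PostBGP} \subseteq \bold{PP}$. Then I would apply the main result of \cite{Post}, namely $\bold{PostBQP} = \bold{PP}$, to rewrite the right-hand side and obtain $\bold{PostBGP} \subseteq \bold{PostBQP}$. The only content is transitivity of set inclusion together with a cited equality.

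Since the corollary is essentially a bookkeeping combination of two statements, there is no genuine obstacle to anticipate. The real substance sits in Theorem~\ref{yeow}, whose proof in Appendix~C handles the step from a general post-selected GPT computation to a $\bold{GapP}$-type expression witnessing membership in $\bold{PP}$; once that is in hand, the reduction to $\bold{PostBQP}$ is immediate via Aaronson's simulation of $\bold{PP}$ by post-selected quantum circuits. I would therefore present the corollary in a single line, explicitly flagging that the interpretation is the one stated in the paragraph preceding the corollary: a post-selecting quantum computer can simulate post-selected computation in any tomographically local GPT.
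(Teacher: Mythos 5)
Your proposal is correct and matches the paper's own derivation exactly: the corollary is obtained by combining Theorem~\ref{yeow} ($\bold{PostBGP}\subseteq\bold{PP}$) with Aaronson's result $\bold{PostBQP}=\bold{PP}$ and applying transitivity of inclusion. No further comment is needed.
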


So, in a world in which we can post-select on at most exponentially-unlikely events, quantum theory is optimal for computation in the space of all tomographically local theories. 
Note that the class of problems efficiently solvable on a probabilistic classical computer with the power of post-selection is unlikely to be as large as $\bold{PP}$: it was shown in \cite{IQP} that if this class, denoted $\bold{BPP_{path}}$, is equal to $\bold{PP}$, then the polynomial hierarchy collapses to the third level. 

It was suggested in \cite{Jon} (see also \cite{markus}) that quantum theory in some sense achieves an optimal balance between the sets of available states and dynamics, in such a way that quantum theory is optimal, or at least powerful, for computation relative to the class of generalised probabilistic theories. 
It is interesting to ask whether Corollary~\ref{aaronson} can be seen as \emph{evidence} in favour of this idea. The following considerations show that caution is needed. 
Consider, for example,  the class $\bold{IQP}$ \cite{IQP}, of restricted quantum computations where the only gates allowed in a circuit are diagonal in the $\{\ket{+},\ket{-}\}$ basis. Clearly $\bold{IQP}\subseteq\bold{BQP}$, but it is unlikely that $\bold{BQP}\subseteq\bold{IQP}$. However, it was shown in \cite{IQP} that $\bold{PostIQP}=\bold{PP}=\bold{PostBQP}$. 
Alternatively, consider the class of restricted quantum computations $\bold{DQC_k}$, discussed in \cite{Clean}, known as the \emph{one clean qubit model}, where the inputs to each circuit are restricted to be one pure qubit with as many maximally mixed qubits as desired. At the end of the computation, $k$ qubits are measured in the computational basis. Clearly, $\bold{DQC_k}\subseteq\bold{BQP}$, but again, $\bold{DQC_k}$ is not believed to be universal for quantum computation.\footnote{In fact, under reasonable assumptions, $\bold{DQC_k}$ is provably not universal for quantum computation \cite{Mixed}.} It was shown in \cite{Clean} that $\bold{PostDQC_k}=\bold{PP}=\bold{PostBQP}$ for $k\geq{3}$. So, while $\bold{PostBQP}\subseteq\bold{PostDQC_k}$, under reasonable assumptions \cite{Mixed} it is not the case that $\bold{BQP}\subseteq\bold{DQC_k}$.

\section{Oracles} \label{oracle} 

In classical computation, an \emph{oracle} is a total function $O:\mathbb{N}\rightarrow\{0,1\}$. A number $x$ is said to be in an oracle $O$ if $O(x)=1$, hence oracles can decide membership in a language. Let $\bold{C}$ and $\bold{B}$ be complexity classes, then $\bold{C}^{\bold{B}}$ denotes the class $\bold{C}$ with an oracle for $\bold{B}$ (see \cite{Comp} for formal definitions). We can think of $\bold{C}^{\bold{B}}$ as the class of languages decided by a computation which is subject to the restrictions and acceptance criteria of $\bold{C}$, but allowing an extra kind of computational step: an oracle for any desired language $\mathcal{L}\in\bold{B}$ that may be queried at any stage in the course of the computation, with each such query counting as a single computational step. That is, bit strings may be generated at any stage of the computation and presented to the oracle, which in a single step, returns the information of whether the bit string is in $\mathcal{L}$ or not. Given two complexity classes, $\bold{C}_1$ and $\bold{C}_2$, we say the relation\footnote{The $=$ can be replaced with $\neq$, $\subseteq$ or $\supseteq$ equally well.} $\bold{C}_1=\bold{C}_2$ holds relative to the oracle $\bold{B}$, if $\bold{C}_1^{\bold{B}}=\bold{C}_2^{\bold{B}}$. Such a result is referred to as a \emph{relativised separation} result.

Oracles play a special role in quantum computation, forming the basis of most known computational speed ups over classical computation \cite{Nielsen}. In quantum computation, oracle queries are represented by a family $\{R_n\}$ of quantum gates, one for each query length. Each $R_n$ is a unitary transformation acting on $n+1$ qubits, whose effect on the computational basis is given by 
$$R_n\ket{x,a}= \ket{x,a\oplus{A(x)}}$$ for all $x\in\{0,1\}^m$ and $a\in\{0,1\}$, where $A$ is some Boolean function that represents the specific oracle under consideration. One could also consider more general oracles that, when queried, apply some general unitary transformation to the query state, but here, we only consider oracles that compute Boolean functions. 
In the state vector formalism of quantum theory, the action of a unitary oracle is defined on a maximal set of pure and perfectly distinguishable states, namely the computational basis. Linearly extending this to all states in the Hilbert space uniquely defines the action of the oracle on any state. 

As pointed out to us by Howard Barnum \cite{Howard},
the situation for generalised probabilistic theories is more subtle. 
Consider, for example, the density matrix formulation of quantum theory, and suppose that oracle queries correspond to a family of trace-preserving completely-positive maps $\{\mathcal{E}_n\}$. Analogously to the state vector formalism, define the action of the oracle on a maximal set of pure and perfectly distinguishable states $\{\rho_i\}_{i=1}^N$, where each $\rho_i$ is a density matrix, by
\begin{equation}\label{oracledef}
\mathcal{E}_n\big(\rho_x\otimes\rho_a\big)=\rho_x\otimes\rho_{a\oplus{A}(x)},
\end{equation}
where $\rho_x=\rho_{x_1}\otimes\dots\otimes\rho_{x_n}$ and $A$ is the function computed by the oracle. Note that
$$\rho_x\otimes\rho_a\rightarrow\rho_x\otimes\rho_{a\oplus{A}(x)} \quad \Longleftrightarrow \quad \ket{x,a}\rightarrow e^{i\phi(x,a)}\ket{x,a\oplus{A(x)}},$$
where $a=1,\dots,N$ and $e^{i\phi(x,a)}$ is some phase factor that depends on the query state. Now, in addition to being able to compute the function $A$, a quantum computer with access to the oracle may also acquire information about the function $\phi$, which may be hard to compute \cite{phaseoracle}.
The usual definition of a quantum oracle therefore prevents `sneaking in information' through phase factors. 

In generalised probabilistic theories (with sufficient distinguishable states), it is easy to produce a definition of an oracle analogous to that of Eq.~(\ref{oracledef}). But for a system type $A$, a maximal set of pure and perfectly distinguishable states does not in general span the vector space $\bold{V_A}$. Hence the action of an oracle on such a set of states will not, in general, uniquely define its action on an arbitrary state in the state space. It is then not clear what extra condition must be placed on the oracle, first to define its action on arbitrary input states, and second to prevent non-trivial information being obtained through its action on non-basis input states (perhaps via a generalised notion of phase \cite{phase}). 

Rather than attempt to solve this problem, we will instead consider a notion of `classical oracle' that can be defined in any generalised probabilistic theory that satisfies the causality assumption of Section~\ref{causality}. The causality assumption allows the construction of adaptive circuits without paradox (see \cite{Pavia1} for a more thorough discussion of the causality assumption, adaptive circuits, and conditioned transformations). In an adaptive circuit, the choice of which test to perform can depend on the outcomes $r_1,\ldots , r_k$ of previous tests in the circuit. An oracle $A:\mathbb{N}\rightarrow\{0,1\}$ defines an extra gate that can be used in a computation in addition to those of the finite gate set, but with input and output that are classical wires, rather than being typed as with the gates intrinsic to the theory.  The input to the oracle is a function $f(r_1,\ldots,r_k)$ of the outcomes of tests that appear in the circuit prior to the use of the oracle. The design of that portion of the circuit that is subsequent to the oracle can depend on the output $A(f)$ of the oracle. An oracle can be used in this way an unlimited number of times in a circuit, with each use counting as one gate.
The uniformity condition must be extended, so that for each use of the oracle in a circuit, the input $f(r_1,\ldots,r_k)$, and the design of the circuit subsequent to the oracle, are computable in poly-time by a Turing machine with access to an oracle for $\bold{A}$. The acceptance criterion can also be extended so that for a circuit outcome $z$, the function $a(z)$ is computable in poly-time by a Turing machine with access to an oracle for $\bold{A}$.
\begin{define} 
For each causal generalised probabilistic theory $\bold{G}$, a language $\mathcal{L}$ is in the class $\bold{BGP}_{cl}^{\bold{A}}$ if there exists a poly-size uniform family of circuits with access to the classical oracle $\bold{A}$, and an efficient acceptance condition, such that 
\begin{itemize} 
\item $x\in\mathcal{L}$ is accepted with probability at least $\frac{2}{3}$. 
\item $x\notin\mathcal{L}$ is accepted with probability at most $\frac{1}{3}$ 
\end{itemize}
\end{define}

We can use the notion of classical oracle to obtain the following relativised separation result.

\begin{thm} \label{NP}  
There exists a classical oracle $\bold{A}$ such that for any causal generalised probabilistic theory $\bold{G}$, $\bold{NP}^{\bold{A}}\nsubseteq\bold{BGP}_{cl}^{\bold{A}}$.
\end{thm}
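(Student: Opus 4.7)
The plan is to reduce the problem to a purely classical question about oracles and then invoke a known relativised separation. The key intermediate step is to verify that the proof of Theorem~\ref{5} relativises cleanly when the generalised probabilistic theory is augmented with a classical oracle, giving
\[
\bold{BGP}_{cl}^{\bold{A}} \subseteq \bold{AWPP}^{\bold{A}}
\]
for every causal generalised probabilistic theory $\bold{G}$ and every oracle $\bold{A}$. Once this is established, producing a single oracle $\bold{A}$ with $\bold{NP}^{\bold{A}} \nsubseteq \bold{AWPP}^{\bold{A}}$ suffices to prove the theorem for all $\bold{G}$ uniformly.

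For the relativised upper bound, I would adapt the $\bold{GapP}$-style argument underlying the proof of Theorem~\ref{5}. An adaptive circuit with classical oracle queries unfolds into a tree whose branches are labelled by both the classical test outcomes $r_1,\ldots,r_q$ and the oracle answers returned along the branch. Along any fixed branch the sequence of matrices $M^{r_q,q},\ldots,M^{r_1,1}$ is determined by the input $x$, the guessed outcomes, and a polynomial number of classical oracle queries whose inputs $f(r_1,\ldots,r_k)$ are themselves polynomial-time computable from earlier branch labels. The conditional probability of the branch is, as in Section~\ref{upperboundscomppower}, a product of matrix entries that are efficiently approximable to within inverse-exponential precision; summing over accepting branches realises the acceptance probability as $f(x)/2^{r(|x|)}$ for a function $f \in \bold{GapP}^{\bold{A}}$, yielding membership in $\bold{AWPP}^{\bold{A}}$.

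For the separation itself I would use the standard witness language $L_{\bold{A}} = \{1^n : \exists y \in \{0,1\}^n \text{ with } \bold{A}(y)=1\}$, which is manifestly in $\bold{NP}^{\bold{A}}$, and appeal to the Fortnow--Rogers strengthening of the Bennett--Bernstein--Brassard--Vazirani oracle separation, which shows that relative to a random oracle $\bold{A}$ one has $\bold{NP}^{\bold{A}} \nsubseteq \bold{AWPP}^{\bold{A}}$ with probability one. This is exactly the mechanism that already yields the oracle separation of $\bold{NP}$ from $\bold{BQP}$ by funnelling through the relativised inclusion $\bold{BQP}^{\bold{A}} \subseteq \bold{AWPP}^{\bold{A}}$; here the same $\bold{A}$ serves for all causal generalised probabilistic theories via the relativised version of Theorem~\ref{5}.

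The main obstacle is the first step: ensuring that the $\bold{GapP}^{\bold{A}}$ computation faithfully handles both the non-determinism over branch outcomes and the data dependence of subsequent gates on oracle responses. One must encode within a single non-deterministic machine the joint guess of all test outcomes, the poly-time computation of each oracle query input from the portion of the guess seen so far, and the correct selection of the next gate matrix; one must also check that the rational approximations to matrix entries can be multiplied along each branch without the accumulated error exceeding the $\bold{AWPP}$ tolerance. Once these bookkeeping points are handled, the separation is a pure black-box invocation of classical oracle complexity and requires no further input from the generalised probabilistic theory framework.
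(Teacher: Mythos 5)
Your overall architecture matches the paper's: establish the relativised inclusion $\bold{BGP}_{cl}^{\bold{A}}\subseteq\bold{AWPP}^{\bold{A}}$ by rerunning the $\bold{GapP}$ argument with $\bold{GapP}^{\bold{A}}$ functions (your bookkeeping about branches, oracle-dependent gate selection, and error accumulation is exactly what the appendix does), and then import a purely classical oracle separating $\bold{NP}$ from $\bold{AWPP}$. The gap is in the second step. You invoke ``the Fortnow--Rogers strengthening of BBBV'' to claim that $\bold{NP}^{\bold{A}}\nsubseteq\bold{AWPP}^{\bold{A}}$ relative to a random oracle with probability one. No such theorem is proved by Fortnow and Rogers, and it does not follow from BBBV: the BBBV $\Omega(\sqrt{N})$ lower bound is a statement about unitary quantum query algorithms, proved by the hybrid argument, and yields $\bold{NP}^{\bold{A}}\nsubseteq\bold{BQP}^{\bold{A}}$; it says nothing about arbitrary $\bold{GapP}^{\bold{A}}$-defined acceptance ratios, which are only required to obey the $\bold{AWPP}$ promise for the one oracle at hand and so are not bounded low-degree polynomials over all oracle settings (which is what an approximate-degree argument would need). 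Whether $\bold{NP}\nsubseteq\bold{AWPP}$ holds relative to a random oracle is not settled by the works you cite, so your final step rests on an unsupported claim.

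What is actually available, and what the paper uses, is the constructed (not random) oracle of Fenner, Fortnow, Kurtz and Li, combined with Yao's techniques: there exists an oracle $\bold{A}$ relative to which $\bold{P}^{\bold{A}}=\bold{AWPP}^{\bold{A}}$ and the polynomial-time hierarchy is infinite. Since an infinite hierarchy forces $\bold{NP}^{\bold{A}}\neq\bold{P}^{\bold{A}}$, one gets $\bold{NP}^{\bold{A}}\nsubseteq\bold{P}^{\bold{A}}=\bold{AWPP}^{\bold{A}}\supseteq\bold{BGP}_{cl}^{\bold{A}}$, and this single oracle works uniformly for every causal $\bold{G}$. Substituting this citation for the random-oracle claim repairs your argument; the relativisation of Theorem~\ref{5} and the surrounding bookkeeping that you describe go through as written.
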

The proof is in Appendix~\ref{last}. This generalises the results of \cite{AWPP} from quantum theory to causal generalised probabilistic theories that satisfy tomographic locality. 
The result proved in the appendix is actually stronger: there exists a classical oracle $\bold{A}$ such that for any causal generalised probabilistic theory $\bold{G}$ that satisfies tomographic locality, the polynomial time hierarchy is infinite and $\bold{BGP}_{cl}^{\bold{A}}\subseteq\bold{P}^{\bold{A}}$. The oracle in question is the same oracle that was used by Fortnow and Rogers in \cite{AWPP}.

\section{Discussion and conclusion}

This work has investigated the relationship between computation and physical principles. Using the circuit framework approach to generalised probabilistic theories, introduced by Hardy in \cite{Hardy1, Hardy2} and Chiribella, D'Ariano and Perinotti in \cite{Pavia1,Pavia2}, the computational power of theories formulated in operational terms can be investigated, along with the role played by simple information-theoretic or physical principles that a theory may or may not satisfy. A rigorous model of computation can be defined that allows a definition of the complexity class of problems efficiently solvable by a specific theory. The strongest known inclusion for the quantum case, $\bold{BQP}\subseteq\bold{AWPP}$, which implies $\bold{BQP} \subseteq \bold{PP}\subseteq\bold{PSPACE}$, still holds in any theory satisfying tomographic locality, and it is notable that this includes even those theories that violate the causality principle. Combining these results with a result of Aaronson's, it follows that any problem efficiently solvable in a theory satisfying tomographic locality can also be solved efficiently by a post-selecting quantum computer. In fact, one can say something stronger: any problem efficiently solvable with post-selection in a theory satisfying tomographic locality can also be solved efficiently by a post-selecting quantum computer. Roughly speaking, then, in a world with post-selection, quantum theory is optimal for computation in the space of all tomographically local theories.




We discussed the problem of defining a computational oracle for an arbitrary theory. In general, this problem may have no good solution, if it is required that the definition of an oracle reduce to the standard definition in the quantum case. Nonetheless, a notion of `classical oracle' can be defined in any theory that satisfies the causality principle, and for such theories there exists a classical oracle relative to which $\bold{NP}$ is not contained in $\bold{BGP}$. 
It is plausible that there is an interesting subclass of theories, for which a notion of oracle can be defined that admits `superposition' of inputs, and reduces to the standard definition in the quantum case. If so, then for these theories, the solution of the `subroutine problem' of \cite{Sub} might serve as an interesting computational principle that could rule out certain theories, potentially providing a new principle from which quantum theory can be derived.  

 
An open question is to establish tighter bounds on the power of general theories. Even with tomographic locality assumed, there is a lot of freedom in the construction of a generalised theory. Is there an explicit construction that solves a hard problem, that is, a problem at least thought to be hard for quantum computers? Even better, can we describe a complexity class, potentially larger than $\bold{BQP}$, and an explicit construction of a general theory $\bold{G}$, such that this class is contained in $\bold{BGP}$? It would be interesting to determine whether violation of the causality principle can confer extra computational power. An initial thought is that there could be a non-causal theory that can efficiently solve \textbf{NP}-complete problems. Given that the inclusion $\bold{BGP}\subseteq\bold{AWPP}$ holds even for non-causal (tomographically local) theories, however, this can only be the case if $\bold{NP}$ is contained in $\bold{AWPP}$. At present, this is unknown, and establishing the question either way would constitute a major advance in complexity theory. Still, it would be interesting if the violation of causality enabled the efficient solution of other problems, thought to be hard for quantum computers, but known to be in $\bold{AWPP}$.

Finally, although our main results do not require the causality principle, we have nonetheless been considering circuits in which gates appear in a fixed structure. It would be interesting to investigate the computational power of theories in which there is no such definite structure. Frameworks for describing situations with indefinite causal structure have been defined with the aim of discussing aspects of quantum gravity \cite{Grav, lucien}. Some preliminary remarks on the computational power of such theories were given in \cite{lucien, PaviaNoCause} and a specific query complexity problem that can be solved with fewer queries on a quantum computer in which the gates do not appear in a fixed order than on a standard quantum computer was presented in \cite{BruknerNoCause}.

 
\subsection*{Acknowledgements}

The authors thank Howard Barnum for helpful discussions on the definition of oracles in GPTs. CML thanks Lance Fortnow for a useful email correspondence about results presented in \cite{AWPP} and Scott Aaronson for a useful email correspondence about \cite{Post}. CML also thanks John Selby, John-Mark Allen, Matty Hoban and Ray Lal for helpful discussions and John Selby and John-Mark Allen for proof reading a draft of the current paper.  This work was supported by the FQXi Large Grants \emph{Time and the structure of quantum theory}, and \emph{Thermodynamic vs information theoretic entropies in probabilistic theories}. This work was supported by the CHIST-ERA DIQIP project.

\begin{appendices}

\section{Approximate circuit families}  

Consider a poly-size uniform circuit family $\{C_x\}$, defined over a finite gate set $\mathcal{G}$. Each gate in $\mathcal{G}$ corresponds to some finite set of transformations, one for each classical outcome of the gate. From the uniformity condition, the entries of the matrices representing these transformations can be calculated to accuracy $\epsilon$ in time poly(log($1/\epsilon$)). With $\epsilon(|x|)$ a function of the input size, consider a family $\{\widetilde{C_x}\}$ of approximations to the original circuits, where matrix elements are replaced by rational numbers within $\epsilon(|x|)$ of the original matrix elements. Call $\{\widetilde{C_x}\}$ an \emph{$\epsilon(|x|)$-approximation} to $\{C_x\}$. The following result shows that $\{\widetilde{C_x}\}$ can simulate $\{C_x\}$, to an accuracy dependent on $\epsilon(|x|)$. 
\begin{prop}\label{3} 
Let $\{C_x\}$ be a uniform circuit family, with the number of gates in $C_x$ bounded by a polynomial $q(|x|)$. Let $\{\widetilde{{C_x}}\}$ be an $\epsilon(|x|)$-approximation to $\{C_x\}$, with $\epsilon(|x|)\leq 1$. If the circuit $C_T\in\{C_x\}$ gives an outcome sequence $z$ with probability $P(z)$, then the circuit $\widetilde{C_T}\in\{\widetilde{C_x}\}$ gives outcome sequence $z$ with amplitude $\widetilde{P}(z)$ such that 
$$ 
|P(z)-\widetilde{P}(z)| \leq  D^{q(|T|)-1} q(|T|)\epsilon(|T|)N , 
$$
where $N$ and $D$ are constants depending on the gate set $\mathcal{G}$.
\end{prop}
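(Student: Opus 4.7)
The plan is to express both $P(z)$ and $\widetilde{P}(z)$ as matrix products in the form exhibited in the sketch proof of Theorem~\ref{5}, and then bound their difference by a standard telescoping argument. Concretely, write
$$P(z) = b^{T} M^{r_q, q} \cdots M^{r_1, 1} b, \qquad \widetilde{P}(z) = b^{T} \widetilde{M}^{r_q, q} \cdots \widetilde{M}^{r_1, 1} b,$$
where $b$ is the unit vector from the sketch proof, and each $\widetilde{M}^{r_k, k}$ is obtained from $M^{r_k, k}$ by entrywise replacement of matrix elements with their rational approximations (within $\epsilon(|T|)$), followed by the same tensoring with identities and zero-padding used to build $M^{r_k, k}$.

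I would then apply the telescoping identity
$$M_q \cdots M_1 - \widetilde{M}_q \cdots \widetilde{M}_1 = \sum_{k=1}^{q} M_q \cdots M_{k+1}\bigl(M_k - \widetilde{M}_k\bigr)\widetilde{M}_{k-1} \cdots \widetilde{M}_1,$$
and bound each of the $q(|T|)$ summands using submultiplicativity of the operator norm. Let $D$ be an upper bound on the operator norm of any $M^{r_k, k}$ or $\widetilde{M}^{r_k, k}$, and let $N$ be a constant such that any per-gate perturbation of entrywise magnitude at most $\epsilon(|T|)$ has operator norm at most $N\epsilon(|T|)$. Then each of the $q-1$ unperturbed factors contributes a norm factor of at most $D$, while the single perturbation factor contributes at most $N\epsilon(|T|)$. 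Summing over the $q(|T|)$ terms and using $|b^{T} X b| \leq \|X\|_{\mathrm{op}}$ yields the required bound $D^{q(|T|)-1}\, q(|T|)\, \epsilon(|T|)\, N$.

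The main obstacle is verifying that the constants $N$ and $D$ depend only on the finite gate set $\mathcal{G}$, and not on the size or width of the circuit. Three observations suffice: (i) each gate in $\mathcal{G}$ has fixed input and output types, hence fixed-dimension matrices, so the entrywise-to-operator-norm conversion factor for any gate is a constant determined by $\mathcal{G}$ alone; (ii) tensoring a matrix with the identity on spectator systems preserves its operator norm; and (iii) padding with rows and columns of zeros likewise preserves the operator norm. The hypothesis $\epsilon(|x|) \leq 1$ is needed only to ensure that the approximated matrices have operator norm bounded in terms of the originals, so $D$ can be chosen independently of $\epsilon$. With these points in place, the telescoping bound yields the stated inequality with no hidden dependence on $q(|T|)$ or the total circuit width.
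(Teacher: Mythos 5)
Your proposal is correct and follows essentially the same route as the paper's own proof: the matrix-product representation $b^{T}M^{r_q,q}\cdots M^{r_1,1}b$, the telescoping/induction bound on products of matrices (the paper's Lemma~\ref{2}), the entrywise-to-operator-norm conversion (Lemma~\ref{1}), and the use of $\epsilon\leq 1$ together with the reverse triangle inequality to make $D$ depend only on the gate set. Your observations that tensoring with identities and zero-padding preserve the operator norm, and that the gate matrices have fixed dimensions, are exactly the points the paper relies on to keep $N$ and $D$ circuit-independent.
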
 
The word \emph{amplitude} here should not be confused with the complex amplitudes of quantum theory. It is used for the real-valued quantity which approximates an outcome probability for the original circuit family, and is used rather than the term \emph{probability}, because this quantity can be (slightly) less than $0$ or (slightly) greater than $1$. (The approximating circuit family is a mathematical construction that need not correspond precisely to a valid circuit family in the theory.) This proposition will be useful in the main proofs, since if $\{C_x\}$ is a circuit family that decides some language $\mathcal{L}$ in $\bold{BGP}$, it follows that a $\frac{1}{12q(|x|)D^{q(|x|)-1}N}$-approximation to $\{C_x\}$ will accept a string $x\in \mathcal{L}$ with amplitude at least $7/12$, and will accept a string  $x\notin \mathcal{L}$ with amplitude at most $5/12$, hence the success amplitude is still bounded away from $1/2$. The uniformity condition ensures that such an $\epsilon(|x|)$-approximation can be constructed in time polynomial in $|x|$.

In order to prove the proposition, two lemmas will be helpful.


\begin{lem}\label{1}
Let $M$ be a real $n\times{m}$ matrix such that for each entry, $m_{ij}$, we have that $|m_{ij}|\leq\epsilon$, for $\epsilon>0$. Then $$\Vert M\Vert _{op} \leq{nm}\epsilon,$$ where $\Vert.\Vert _{op}$ is the operator norm.
\end{lem}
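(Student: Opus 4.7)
The plan is to bound the operator norm by reducing to a direct estimate on $\|Mx\|_2$ for an arbitrary unit vector $x$, since by definition
\[
\|M\|_{op} = \sup_{x\in\mathbb{R}^m,\, \|x\|_2=1} \|Mx\|_2.
\]
First I would fix such a unit vector $x$ and bound each component of $Mx$ entrywise. Using the triangle inequality and the hypothesis $|m_{ij}|\leq \epsilon$, the $i$-th component satisfies
\[
|(Mx)_i| = \Bigl|\sum_{j=1}^m m_{ij}x_j\Bigr| \leq \epsilon \sum_{j=1}^m |x_j| = \epsilon \|x\|_1.
\]
Applying Cauchy--Schwarz to $\|x\|_1 \leq \sqrt{m}\,\|x\|_2 = \sqrt{m}$ gives $|(Mx)_i|\leq \epsilon\sqrt{m}$ for every $i$.

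Next I would sum these bounds to estimate $\|Mx\|_2$. Squaring and summing over the $n$ coordinates yields
\[
\|Mx\|_2^2 = \sum_{i=1}^n |(Mx)_i|^2 \leq n\cdot \epsilon^2 m = nm\epsilon^2,
\]
so $\|Mx\|_2 \leq \epsilon\sqrt{nm}$. Taking the supremum over unit $x$ gives $\|M\|_{op}\leq \epsilon\sqrt{nm}$, and since $n,m\geq 1$ we have $\sqrt{nm}\leq nm$, which yields the stated bound $\|M\|_{op}\leq nm\epsilon$.

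There is no real obstacle here: the lemma is a standard linear-algebra estimate, and in fact the argument produces a strictly tighter bound ($\epsilon\sqrt{nm}$) than the one claimed. I would use the weaker form $nm\epsilon$ only because it is what Proposition~\ref{3} will invoke, presumably to keep subsequent polynomial counting clean. If one instead wanted to avoid the Cauchy--Schwarz step entirely, the same conclusion follows even more crudely from $\|M\|_{op}\leq \|M\|_F = \sqrt{\sum_{ij}m_{ij}^2} \leq \epsilon\sqrt{nm}$.
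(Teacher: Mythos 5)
Your proof is correct and follows essentially the same route as the paper's: fix a unit vector, bound the action of each row using the entrywise hypothesis together with Cauchy--Schwarz, and combine over the $n$ rows. The only difference is that you keep the estimates sharp at each step (obtaining $\epsilon\sqrt{nm}$ before weakening to $nm\epsilon$), whereas the paper's proof discards factors of $\sqrt{m}$ and $\sqrt{n}$ along the way and lands directly on the stated $nm\epsilon$; both are fine for the intended application.
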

\begin{proof}
Let $M_i$ be the $i^{th}$ row of $M$. Then
$$ 
\begin{aligned} 
| M_i |_E =\sqrt{\sum_{j=1}^m{m^2_{ij}}} \leq\sum_{j=1}^m{|m_{ij}|}\leq{\epsilon}m, 
\end{aligned} 
$$
where $|.|_E$ is the Euclidean norm, hence
$$ \begin{aligned} | M  v |_E \leq\sum_{i^=1}^n |  M_i v |
\leq{\sum_{i=1}^n\epsilon{m}}={nm}\epsilon,
\end{aligned} 
$$                                                                             
for $| v | =1$, where the second inequality follows from the Cauchy-Schwarz inequality. Thus  $\Vert M \Vert _{op} \leq  {nm}\epsilon.$
\end{proof}

\begin{lem} \label{2}
Let $\{{M}_i\}_{i=1}^T$ and $\{\widetilde{{M}}_i\}_{i=1}^T$ be two sets of matrices. Then the $T$-fold product of these matrices satisfies
$$ \begin{aligned} \Vert{M}_T&\dots{M}_1-\widetilde{{M}}_T\dots\widetilde{{M}}_1\Vert_{op}\leq D^{T-1}\sum_{i=1}^T\Vert{M}_i-\widetilde{{M}}_i\Vert_{op}, \end{aligned}$$
where $D=\max\{\Vert M_1\Vert_{op}, \dots, \Vert M_T\Vert_{op}, \Vert \widetilde{M}_1\Vert_{op}, \dots, \Vert\widetilde{M}_T\Vert_{op}\}$.
\end{lem}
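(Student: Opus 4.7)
The plan is to prove this standard product-perturbation estimate by induction on $T$, using the telescoping identity
$$M_T \cdots M_1 - \widetilde{M}_T \cdots \widetilde{M}_1 = M_T\bigl(M_{T-1}\cdots M_1 - \widetilde{M}_{T-1}\cdots \widetilde{M}_1\bigr) + (M_T - \widetilde{M}_T)\bigl(\widetilde{M}_{T-1}\cdots \widetilde{M}_1\bigr),$$
which peels one factor off the top and reduces the difference of $T$-fold products to a difference of $(T-1)$-fold products together with a single perturbed factor. The only analytic facts needed are the triangle inequality and submultiplicativity of $\Vert\cdot\Vert_{op}$, plus the uniform bound $\Vert M_i\Vert_{op}, \Vert\widetilde{M}_i\Vert_{op} \leq D$ built into the definition of $D$.

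The base case $T=1$ reduces to the trivial inequality $\Vert M_1 - \widetilde{M}_1\Vert_{op} \leq D^{0}\Vert M_1 - \widetilde{M}_1\Vert_{op}$. For the inductive step, assume the bound holds at length $T-1$. Apply the triangle inequality and submultiplicativity to the telescoping identity to obtain
$$\Vert M_T \cdots M_1 - \widetilde{M}_T \cdots \widetilde{M}_1\Vert_{op} \leq \Vert M_T\Vert_{op}\,\Vert M_{T-1}\cdots M_1 - \widetilde{M}_{T-1}\cdots \widetilde{M}_1\Vert_{op} + \Vert M_T - \widetilde{M}_T\Vert_{op}\,\Vert \widetilde{M}_{T-1}\cdots \widetilde{M}_1\Vert_{op}.$$
Bound the first factor in each summand by $D$, bound $\Vert \widetilde{M}_{T-1}\cdots \widetilde{M}_1\Vert_{op}$ by $D^{T-1}$ via submultiplicativity, and apply the induction hypothesis to the $(T-1)$-fold difference. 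This gives $D\cdot D^{T-2}\sum_{i=1}^{T-1}\Vert M_i - \widetilde{M}_i\Vert_{op} + D^{T-1}\Vert M_T - \widetilde{M}_T\Vert_{op}$, which collapses to $D^{T-1}\sum_{i=1}^{T}\Vert M_i - \widetilde{M}_i\Vert_{op}$ as claimed.

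There is no genuine obstacle: the lemma is a routine product-perturbation estimate, and the only care required is in tracking the exponent of $D$ so that submultiplicativity (contributing one factor of $D$ per multiplication) and the induction hypothesis (already producing $D^{T-2}$) combine correctly to $D^{T-1}$. One could equivalently peel off $M_1$ on the right instead of $M_T$ on the left; both orientations yield the same bound, so the direction of induction is a matter of convenience.
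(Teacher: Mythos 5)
Your proof is correct and follows essentially the same route as the paper: a telescoping decomposition of the difference of products, the triangle inequality and submultiplicativity of the operator norm, and induction on $T$ (the paper writes out the $T=2$ case and asserts the induction, whereas you carry out the general inductive step explicitly). No further comment is needed.
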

 \begin{proof}
Consider the case of $T=2$. With $|v|=1$, 
$$\begin{aligned} &\vert\big({M}_2{M}_1-\widetilde{{M}}_2\widetilde{{M}}_1\big)v\vert_E \\ &=\vert\big({M}_2{M}_1-\widetilde{{M}}_2{{M}}_1\big)v+\big(\widetilde{{M}}_2{M}_1-\widetilde{{M}}_2\widetilde{{M}}_1\big)v\vert_E \\
&\leq\vert\big({M}_2-\widetilde{{M}}_2\big){{M}_1}v\vert_E+\vert\widetilde{{M}}_2\big({M}_1-\widetilde{{M}}_1\big)v\vert_E \\
&\leq\Vert{M}_2-\widetilde{{M}}_2\Vert_{op}\Vert M_1\Vert_{op}+\Vert\widetilde{M}_2\Vert_{op}\Vert{M}_1-\widetilde{{M}}_1\Vert_{op}.
\end{aligned}$$
Thus
$$\Vert{M}_2{M}_1-\widetilde{{M}}_2\widetilde{{M}}_1\Vert_{op}\leq D\Vert{M}_1-\widetilde{{M}}_1\Vert_{op}+D\Vert{M}_2-\widetilde{{M}}_2\Vert_{op}$$
The result follows from induction on $T$.
\end{proof} 

We can now prove Proposition~\ref{3}.
\begin{proof} 
A particular outcome sequence of the circuit ${C_T}\in\{{C_x}\}$ corresponds to a sequence of matrices  ${\mathcal{G}}^{r_{1},1},  \dots, {\mathcal{G}}^{r_{q},q}$, where ${\mathcal{G}}^{r_i,i}$ represents the $r_i$th outcome of the $i$th gate in ${C_T}$. Note that states and effects are included in this sequence. Tensoring these gates with identity transformations on systems on which they do not act and padding the corresponding matrices with rows and columns of zeros results in a sequence of square matrices ${{M}}^{r_{q},q}, \dots, {{M}}^{r_{1},1}$ such that
\[
P(z) = P(r_1,\dots , r_q) = b^T.{{M}}^{r_{q},q}\dots{{M}}^{r_1,1}.b,
\]
where $b$ is the vector $(1,0,\dots,0)$ and $b^T$ is its transpose. Similarly for $\widetilde{\mathcal{G}}^{r_{1},1},  \dots, \widetilde{\mathcal{G}}^{r_{q},q}$, so that
\[
\widetilde{P}(z) = \widetilde{P}(r_1,\ldots , r_q) = b^T.\widetilde{{M}}^{r_{q},q}\dots\widetilde{{M}}^{r_1,1}.b.
\]
Note that $\Vert M^{r_i,i}\Vert_{op}\leq \Vert{\mathcal{G}}^{r_{i},i}\Vert_{op}$ and  $\Vert \widetilde{M}^{r_i,i}\Vert_{op}\leq \Vert\widetilde{\mathcal{G}}^{r_{i},i}\Vert_{op}$, for all $i$.
Therefore,
$$\begin{aligned}
|P(z)-\widetilde{P}(z)|&=|b^T\big({M}^{r_{q},q}\dots{M}^{r_1,1}-\widetilde{{M}}^{r_{q},q}\dots\widetilde{{M}}^{r_1,1}\big)b|\\
&\leq\vert b^T\vert_E \vert\big({M}^{r_{q},q}\dots{M}^{r_1,1}-\widetilde{{M}}^{r_{q},q}\dots\widetilde{{M}}^{r_1,1}\big)b\vert_E \\
&\leq D'^{q(|T|)-1}\sum_{n=1}^{q}\Vert{M}^{r_n,n}-\widetilde{{M}}^{r_n,n}\Vert_{op}\leq{D'^{q(|T|)-1}q(|T|)N\epsilon(|T|)},
\end{aligned}$$
where if $n_im_i$ is the size of the matrix ${\mathcal{G}}^{r_i,i}$, then  
\[
N=\max\{n_{q}m_{q},\dots,n_1m_1\},
\]
and 
\[
D'=\max\{\Vert {\mathcal{G}}^{r_1,1}\Vert_{op}, \dots, \Vert {\mathcal{G}}^{r_q,q}\Vert_{op}, \Vert \widetilde{\mathcal{G}}^{r_1,1}\Vert_{op}, \dots, \Vert \widetilde{\mathcal{G}}^{r_q,q}\Vert_{op}\}.
\]
Note that, as circuits are built from finite gate sets, $N$ is a constant. The first inequality follows from the Cauchy-Schwarz inequality, the second from that fact that $|b^T|=1$ and lemma \ref{2}, the third from lemma \ref{1}, the fact that the sum has $q(|T|)$ entries and the fact that, as $\widetilde{C}_T$ is an $\epsilon$-approximation of $C_T$, the matrix ${M}^{r_i,i}-\widetilde{{M}}^{r_i,i}$ has entries satisfying $|m_{ij}-\widetilde{m}_{ij}|\leq\epsilon$. 

The reverse triangle inequality gives
\[
\Vert\widetilde{\mathcal{G}}^{r_i,i}\Vert_{op}-\Vert{\mathcal{G}}^{r_i,i}\Vert_{op}\leq \Vert\widetilde{\mathcal{G}}^{r_i,i}-{\mathcal{G}}^{r_i,i}\Vert_{op}\leq N\epsilon(|T|).
\]
With $\epsilon(|T|)\leq 1$, and 
\[
D''=\max\{\Vert {\mathcal{G}}^{r_1,1}\Vert_{op}, \dots, \Vert {\mathcal{G}}^{r_q,q}\Vert_{op}\},
\]
we have $D'\leq D\equiv D'' +N$, which completes the proof.

\end{proof}

\section{Proof of Theorem~\ref{5}}  \label{first}

One method of proving Theorem~\ref{5} is to use $\bold{GapP}$ functions. $\bold{GapP}$ functions were first studied in the context of quantum computation by Fortnow and Rogers in \cite{AWPP}, where, among other things, they showed that $\bold{BQP}\subseteq\bold{AWPP}$. A good discussion on $\bold{GapP}$ functions can be found in Watrous's survey of quantum complexity theory \cite{Comp1}. Proofs in this section are modifications and generalisations of proofs presented in \cite{AWPP, Comp1, Comp}.

Given a polynomial-time non-deterministic Turing machine $M$ and input string $x$, denote by $M_{acc}(x)$ the number of accepting computation paths of $M$ given input $x$, and by $M_{rej}(x)$ the number of rejecting computation paths of $M$ given $x$. A function $f:\{0,1\}^{*}\rightarrow\mathbb{Z}$ is a $\bold{GapP}$ function if there exists a polynomial-time non-deterministic Turing machine $M$ such that $f(x)=M_{acc}(x)-M_{rej}(x)$ for all input strings $x$. 

Many complexity classes can be described in terms of $\bold{GapP}$ functions. For example the class $\bold{PP}$ can be defined as those languages $\mathcal{L}$ such that, for some $\bold{GapP}$ function $f$ and any input string $x$, if $x\in\mathcal{L}$ then $f(x)>0$ but if $x\notin\mathcal{L}$ then $f(x)\leq{0}$.
A useful class of $\bold{GapP}$ functions is provided by the following theorem.
\begin{thm} \label{gap}
Any function $f:\{0,1\}^*\rightarrow\mathbb{Z}$ that can be computed in poly-time by a Turing machine is a $\bold{GapP}$ function.
\end{thm}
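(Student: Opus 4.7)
The plan is to construct, for a given polynomial-time computable $f:\{0,1\}^*\to\mathbb{Z}$, an explicit nondeterministic polynomial-time Turing machine $M$ whose gap $M_{acc}(x)-M_{rej}(x)$ equals $f(x)$ on every input $x$. Since $f$ is computable in polynomial time, its output has polynomially many bits, so I can fix a polynomial $p$ with $|f(x)|<2^{p(|x|)}$ for all $x$.

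On input $x$, the machine $M$ first computes $f(x)$ deterministically together with its sign and magnitude. It then nondeterministically guesses a string $y\in\{0,1\}^{p(|x|)}$, which I view as an integer in $\{0,\dots,2^{p(|x|)}-1\}$, producing $2^{p(|x|)}$ branches. On each branch where $y<|f(x)|$, the machine halts in the accepting state if $f(x)\geq 0$ and in the rejecting state if $f(x)<0$. On each branch where $y\geq|f(x)|$, the machine makes one further nondeterministic bit guess and accepts on one sub-branch and rejects on the other, so these ``leftover'' branches contribute equally to $M_{acc}(x)$ and $M_{rej}(x)$.

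A direct count now gives the claim. When $f(x)\geq 0$ there are $f(x)$ ``tilted'' accepting branches, plus $(2^{p(|x|)}-f(x))$ accepting and $(2^{p(|x|)}-f(x))$ rejecting canceling-pair branches, so the gap is $f(x)+(2^{p(|x|)}-f(x))-(2^{p(|x|)}-f(x))=f(x)$. When $f(x)<0$ the symmetric computation gives gap $-|f(x)|=f(x)$. The whole computation is clearly polynomial-time nondeterministic: the initial deterministic computation of $f(x)$ is poly-time by hypothesis, the guess is of poly length, and the comparison $y<|f(x)|$ together with the canceling branches is elementary.

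I do not anticipate a real obstacle; the argument is essentially bookkeeping exploiting the flexibility of nondeterministic machines to shape the multiset of accepting and rejecting paths into any desired signed integer count. The one subtlety worth flagging is that the comparison $y<|f(x)|$ and the subsequent accept/reject behaviour must be carried out \emph{on each} computational branch (rather than resolved once at the outset), so that the branch counts are indeed as claimed. Apart from this point the proof should be routine, and the result will feed directly into the subsequent appendix arguments showing $\mathbf{BGP}\subseteq\mathbf{AWPP}$ by expressing circuit outcome amplitudes as suitable $\mathbf{GapP}$ functions.
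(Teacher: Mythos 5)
Your construction is correct: the branch counts are exactly as you compute them (for $f(x)\ge 0$ the gap is $f(x)+(2^{p(|x|)}-f(x))-(2^{p(|x|)}-f(x))=f(x)$, and symmetrically for $f(x)<0$), the bound $|f(x)|<2^{p(|x|)}$ is guaranteed by the polynomial running time of $f$, and the machine is plainly nondeterministic polynomial time. Note that the paper does not actually prove this theorem -- it simply cites Hemaspaandra and Ogihara -- so your argument serves as a self-contained proof of the cited fact; it is the standard one (a deterministic computation of $f(x)$, a block of $|f(x)|$ ``tilted'' branches, and cancelling accept/reject pairs on the remainder), essentially the usual demonstration that $\mathbf{FP}\subseteq\mathbf{GapP}$ via $\#\mathbf{P}$ and closure under subtraction, folded into a single machine. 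Your flagged subtlety is harmless either way: since the computation of $f(x)$ is deterministic, it contributes a single prefix to every computation path regardless of whether one imagines it performed before or after the nondeterministic guesses, so the path counts are unaffected.
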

For a proof, see \cite[p.237]{Comp}.

The notation $\langle{x,y}\rangle$ denotes the pairing function \cite{AWPP}: that is, a poly-time computable function that maps the pair of strings $x$ and $y$ bijectively to the
set of finite length strings $\{0,1\}^*$ such that, given $\langle{x,y}\rangle$, both $x$ and $y$ can be extracted in poly-time.
The following proposition gives slight generalisations of standard closure properties of $\bold{GapP}$ functions.
%
%
\begin{prop} \label{bant}
For a polynomial $q$ and $\bold{GapP}$ function $f$, let $h:\{0,1\}^*\rightarrow\mathbb{Z}$ be defined for all $x\in\{0,1\}^*$ by
$$h(x)=\sum_{{\substack{
  |y| \leq q(|x|) \\
   y\in{L_x}
  }}}f(\langle{x,y}\rangle),$$
where $L_x$ is some set (that may depend on $x$) with the property that membership of $y$ in $L_x$ can be determined in time polynomial in $|x|$. Then $h$ is a $\bold{GapP}$ function. 

Now let $g:\{0,1\}^*\rightarrow\mathbb{Z}$ be defined for all $x\in\{0,1\}^*$ by
$$g(x)=\prod_{{\substack{
  1\leq i \leq q(|x|) \\
   i\in{L_x}
  }}} f(\langle{x,i}\rangle),$$
where the symbol $i$ appearing as the second argument on the pairing is a binary encoding of $i$ and $L_x$ is some set with the property that membership of $i$ in $L_x$ can be determined in time polynomial in $|x|$. Then $g$ is also a $\bold{GapP}$ function.
\end{prop}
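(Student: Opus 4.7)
The plan is to prove both parts by extending the standard closure properties of $\bold{GapP}$ under polynomially-bounded summation and polynomially-bounded products (as found, e.g., in \cite{Comp}) to the restricted setting where the index is further filtered by the poly-time-decidable set $L_x$. Throughout, let $M_f$ be a polynomial-time non-deterministic Turing machine witnessing that $f$ is $\bold{GapP}$, so that $f(w) = M_{f,acc}(w) - M_{f,rej}(w)$ for every input $w$.

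For $h$, I would construct a polynomial-time non-deterministic Turing machine $M_h$ that, on input $x$, first non-deterministically guesses a string $y$ with $|y|\leq q(|x|)$ and then deterministically tests whether $y\in L_x$ using the assumed poly-time decision procedure. If $y\notin L_x$, $M_h$ enters a short auxiliary non-deterministic branch with exactly one accepting and one rejecting leaf, contributing $0$ to the overall gap. If $y\in L_x$, $M_h$ simulates $M_f$ on input $\langle x,y\rangle$, so that the contribution of this choice of $y$ to the gap is exactly $f(\langle x,y\rangle)$. Summing over all guesses of $y$ yields $M_{h,acc}(x)-M_{h,rej}(x)=h(x)$, hence $h\in\bold{GapP}$.

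For $g$, the plan is to use a running parity bit to encode the alternating signs in the expansion $\prod_i(a_i-b_i) = \sum_{(p_i)}(-1)^{\#\{i\,:\,p_i\text{ rejects}\}}$, where $a_i$ and $b_i$ are the numbers of accepting and rejecting paths of $M_f$ on $\langle x,i\rangle$, the product ranges over $\{i : 1\leq i\leq q(|x|),\,i\in L_x\}$, and the sum ranges over tuples of $M_f$-paths indexed by this same set. Concretely, I would construct a non-deterministic Turing machine $M_g$ that iterates $i=1,2,\ldots,q(|x|)$ while maintaining a parity bit $b$ initialized to $0$. At each stage, $M_g$ first deterministically checks whether $i\in L_x$; if not, it proceeds to $i+1$ without branching. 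Otherwise, it simulates $M_f$ on $\langle x,i\rangle$ and flips $b$ on any branch in which the simulation rejects, before continuing. After all indices have been processed, $M_g$ accepts iff $b=0$. A complete computation path of $M_g$ then corresponds to a choice of one $M_f$-path per index $i\leq q(|x|)$ with $i\in L_x$, and contributes $+1$ or $-1$ to $M_{g,acc}(x)-M_{g,rej}(x)$ according to the parity of the number of rejecting sub-paths, so the overall gap equals $g(x)$.

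The main obstacle I anticipate is the bookkeeping in the product case, namely verifying carefully that the parity-bit acceptance rule produces exactly the alternating-sign expansion of $\prod_i(a_i-b_i)$ rather than some other signed combination, and that skipped indices $i\notin L_x$ introduce neither a multiplicative factor nor spurious branches into $M_{g,acc}-M_{g,rej}$. The remaining details (polynomial runtime of $M_h$ and $M_g$, correctness of the balanced zero-contribution branches in $M_h$, and the routine handling of the pairing function $\langle\cdot,\cdot\rangle$) follow directly from the assumed poly-time decidability of $L_x$ and standard properties of non-deterministic Turing machines.
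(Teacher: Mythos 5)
Your proof of the summation case is exactly the paper's argument: guess $y$ with $|y|\leq q(|x|)$, simulate $M_f$ on $\langle x,y\rangle$ when $y\in L_x$, and branch into one accepting and one rejecting leaf (contributing zero gap) when $y\notin L_x$. For the product case the paper simply defers to ``a similar generalisation of a standard argument,'' and your parity-bit construction, with non-members of $L_x$ skipped without branching so that they contribute a multiplicative factor of $1$, is precisely that standard argument correctly adapted.
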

\begin{proof}
We will prove the first statement only as the second statement follows from a similar generalisation of a standard argument. Let $f(x)=M_{acc}(x)-M_{rej}(x)$ for some non-deterministic poly-time Turing machine, $M$. Let $N$ be a non-deterministic poly-time Turing machine that, on input $x\in\{0,1\}^*$, guesses a string $y$ of length $\leq q(|x|)$, decides whether $y$ is in $L_x$, and 
\begin{itemize}
\item if $y\in L_x$, simulates $M$ on input $\langle{x,y}\rangle$. 
\item if $y\notin L_x$, guesses a bit $b$ and accepts if and only if $b=0$. 
\end{itemize}
$N$ runs in poly-time, and for every $x\in\{0,1\}^*$, $N_{acc}(x)-N_{rej}(x)=h(x)$, hence $h$ is a $\bold{GapP}$ function.
\end{proof}

For the rest of this section, assume that the pairing function is used whenever a function has two or more arguments. $\bold{GapP}$ functions are intimately related to computation in generalised probabilistic theories, as the following result shows. 

\begin{thm} \label{4} 
Let $\{C_x\}$ be a poly-size uniform family of circuits in a generalised probabilistic theory. Then for any polynomial $w$ and constant $D$, there exists a function $\epsilon(|x|)\leq {1}/{D^{w(|x|)}}$, and an $\epsilon(|x|)$-approximation $\{\widetilde{{C_x}}\}$ to $\{C_x\}$, such that the amplitude for acceptance\footnote{Note that, as $\{\widetilde{{C_x}}\}$ is a mathematical construction, it need not correspond to a valid circuit family in the theory and so cannot be said to accept or reject an input string. However, for ease of notation, we will say an approximating circuit `accepts' an input string if $a(z)=0$ where $z$ is the outcome sequence of that approximating circuit, and `rejects' the input string otherwise.} of a circuit $\widetilde{C_T}\in\{\widetilde{{C_x}}\}$ is given by
$$\widetilde{P}_T(\mathrm{accept})=\frac{f(T)}{2^{p(|T|)}},$$ 
where $f$ is a $\bold{GapP}$ function and $p(|T|)$ is a polynomial in the size of the input string.
\end{thm}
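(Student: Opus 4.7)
The plan is to convert the approximated acceptance amplitude into the form "sum of products of integers, divided by a power of two'' and then apply the GapP closure properties from Proposition~\ref{bant}.

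First I would fix the approximation scheme. Given the polynomial $w$ and constant $D$, choose a polynomial $p(|T|)$ large enough that $2^{-p(|T|)} \leq 1/D^{w(|T|)}$. For each gate matrix entry $M^{r_n,n}_{ij}$ of each gate in the finite gate set $\mathcal{G}$, the uniformity condition lets a Turing machine output, in time $\mathrm{poly}(p(|T|))$, a rational approximation within $2^{-p(|T|)-O(1)}$ of the true entry. Round each such approximation to the nearest integer multiple of $2^{-p(|T|)}$, so that
\[
\widetilde{M}^{r_n,n}_{ij} \;=\; \frac{a^{r_n,n}_{ij}(T)}{2^{p(|T|)}},
\]
where $a^{r_n,n}_{ij}(T)\in\mathbb{Z}$ is computable in time polynomial in $|T|$. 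This defines the $\epsilon(|T|)$-approximation $\{\widetilde{C_x}\}$ with $\epsilon(|T|)\leq 1/D^{w(|T|)}$.

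Next I would expand the acceptance amplitude. Writing $q = q(|T|)$ for the number of gates and padding each matrix to act on the full circuit space, a single outcome sequence $z=r_1\ldots r_q$ contributes
\[
\widetilde{P}(z) \;=\; b^{T}\,\widetilde{M}^{r_q,q}\cdots \widetilde{M}^{r_1,1}\, b
\;=\; \sum_{i_1,\ldots,i_{q-1}} \widetilde{M}^{r_q,q}_{1\,i_{q-1}} \cdots \widetilde{M}^{r_1,1}_{i_1\,1}.
\]
Substituting the rational forms and summing over accepting $z$ gives
\[
\widetilde{P}_T(\mathrm{accept}) \;=\; \frac{1}{2^{P(|T|)}} \sum_{\substack{z:\,a(z)=0 \\ i_1,\ldots,i_{q-1}}} \prod_{n=1}^{q} a^{r_n,n}_{i_n\,i_{n-1}}(T),
\]
where $P(|T|) = p(|T|)\,q(|T|)$ is polynomial, and the boundary conventions $i_0=i_q=1$ are fixed.

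The final step is to recognise the right-hand sum as a GapP function. Encode a summation index as a single string $y=(z,i_1,\ldots,i_{q-1})$; since matrix sizes after padding are at most exponential in $|T|$ but each index $i_n$ is therefore $\mathrm{poly}(|T|)$ bits long, $|y|$ is bounded by a polynomial in $|T|$. For a fixed pair $(T,y)$, the product $\prod_n a^{r_n,n}_{i_n\,i_{n-1}}(T)$ is the product of polynomially many integers each of which is computable in polynomial time from $T$ and $y$, hence is itself a polynomial-time computable integer-valued function of $\langle T,y\rangle$. By Theorem~\ref{gap} it is therefore a GapP function. The indexing set $L_T = \{y : a(z)=0 \text{ and all } i_n \text{ lie in valid ranges}\}$ is decidable in time polynomial in $|T|$, since $a$ is an efficient acceptance function and the dimensions of the gate matrices in $\mathcal{G}$ are fixed constants. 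Applying the summation closure property (Proposition~\ref{bant}, first part) yields a GapP function $f$ with $f(T)$ equal to the numerator above, completing the proof.

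The routine obstacle is bookkeeping: one must verify that with the padding by identities, the internal indices $i_n$ can be coherently packaged as a single polynomial-length string $y$ so that membership of $y$ in $L_T$ is poly-time decidable and the product over $n$ is computable from $\langle T,y\rangle$ in polynomial time. Once this is laid out, everything else follows mechanically from Theorem~\ref{gap} and Proposition~\ref{bant}; no deeper structural property of the theory $\bold{G}$ is used beyond tomographic locality (which is what makes the tensor-product matrix representation and hence the Feynman-path expansion available).
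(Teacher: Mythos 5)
Your proposal is correct and follows essentially the same route as the paper's proof: expand the approximated acceptance amplitude as a Feynman-path sum over outcome sequences and internal matrix indices of products of dyadic-rational entries with polynomial-time computable integer numerators, then invoke the $\bold{GapP}$ closure properties of Proposition~\ref{bant}. The only (harmless) difference is that you treat the whole product over $n$ as a single polynomial-time computable integer function of $\langle T,y\rangle$ (hence $\bold{GapP}$ directly by Theorem~\ref{gap}) and merge the sum over internal indices with the sum over accepting outcomes into one application of the summation-closure property, whereas the paper applies the product-closure and then two separate summation-closures from Proposition~\ref{bant}.
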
 
\begin{proof}   
It follows from the uniformity condition that for any polynomial $w$, there is an $\epsilon(|x|)$-approximation $\{\widetilde{{C_x}}\}$ to $\{C_x\}$, with $\epsilon(|x|)\leq {1}/{D^{w(|x|)}}$, such that the entries in the matrices representing gates in the circuit $\widetilde{C_T}\in\{\widetilde{{C_x}}\}$ have rational entries, and can be computed in time polynomial in $|T|$. Furthermore, the rational entries can be taken to have the form $c/2^d$, with $c\in\mathbb{Z}$, $d\in\mathbb{N}$, and $d$ a polynomial function of $|T|$. Padding circuits with identity gates if necessary, assume that the number of gates in the circuit $\widetilde{C_T}$ is given by a polynomial function $q(|T|)$. A particular outcome of the circuit corresponds to matrices  $\widetilde{\mathcal{G}}^{r_{1},1},  \dots, \widetilde{\mathcal{G}}^{r_{q},q}$, where $\widetilde{\mathcal{G}}^{r_i,i}$ represents the $r_i$th outcome of the $i$th gate in $\widetilde{C_T}$. States and effects are included in this sequence.


By tensoring these gates with identity transformations on systems on which they do not act and padding the corresponding matrices with rows and columns of zeros, we can obtain a sequence of square matrices $\widetilde{\mathcal{M}}^{r_{1},1},\ldots, \widetilde{\mathcal{M}}^{r_{q},q}$, such that (i) rows and columns of these matrices are indexed by bit strings of length $y(|T|)$, with $y(|T|)$ a polynomial function, and (ii) the amplitude of outcome $z = r_1,\ldots, r_q$ is given by 
\[
b^T . \, \widetilde{\mathcal{M}}^{r_{q},q}\cdots \widetilde{\mathcal{M}}^{r_{1},1} \, . b,  
\]
where $b$ is the vector $(1,0,\ldots , 0)$ and $b^T$ is its transpose. 
Note that  for each $\mathcal{\widetilde{M}}^{r_i,i}$, the matrix $2^{d}\mathcal{\widetilde{M}}^{r_i,i}$ has integer entries. 

Consider the function $h:\{0,1\}^*\rightarrow\mathbb{Z}$ given by
$$h(T,r_1,\ldots,r_q,n,i_0,\dots,i_{q})=M^{r_n,n}_{i_ni_{n-1}},$$
where $i_0,\ldots,i_q$ are bit strings of length $y(|T|)$, and $M^{r_n,n}_{i_ni_{n-1}}$ is the ${i_ni_{n-1}}$ entry of the matrix $2^{d}\mathcal{\widetilde{M}}^{r_n,n}$ . 
By the uniformity condition, these matrix entries can be calculated in polynomial time by a Turing machine, so by Theorem~\ref{gap},  $h$ is a $\bold{GapP}$ function. 

The amplitude for outcome $z=r_1\ldots r_q$ is given by
\begin{eqnarray*}
\widetilde{P}(z)&=&\frac{1}{2^{d q}}\sum_{\{i_1,\dots,i_{q-1}\}}M^{r_{q},q}_{1i_{q-1}}\dots{M}_{i_{2}i_{1}}^{r_2,2}M^{r_1,1}_{i_11} \\
&=&\frac{1}{2^{d q}}\sum_{\{i_1,\dots,i_{q-1}\}}\prod_{1\leq n \leq q}h(T,r_1,\ldots,r_q,n,i_0=1,i_1,\dots,i_{q-1},i_q=1) \\
&=&\frac{1}{2^{d q}}\sum_{\{i_1,\dots,i_{q-1}\}}g(T,r_1,\ldots,r_q,i_1,\dots,i_{q-1}),\\
&=&\frac{f'(T,z)}{2^{d q}},
\end{eqnarray*}
where $g$ is a $\bold{GapP}$ function by Proposition~\ref{bant}, hence $f'$ is a $\bold{GapP}$ function by another application of Proposition~\ref{bant}. 

The amplitude for the circuit $\widetilde{C_T}$ to accept is given by 
$$\widetilde{P}_T(\mathrm{accept})=\sum_{a(z)=0}\widetilde{P}_T(z)=\sum_{a(z)=0}\frac{f'(T,z)}{2^{d q}},$$
where $a(z)$ is the function that determines if $z$ is an accepting or rejecting outcome. By the uniformity condition, $a(z)$ can be calculated in polynomial time by a Turing machine, hence Proposition~\ref{bant} gives
$$\widetilde{P}_T(\mathrm{accept})=\frac{f(T)}{2^{p(|T|)}},$$
where $f$ is a $\bold{GapP}$ function and $d(|T|)q(|T|)=p(|T|)$ is a polynomial that takes values in $\mathbb{N}$. 
\end{proof} 


The class $\bold{AWPP}$ (Almost Wide Probabilistic Polynomial time) can be defined \cite{Fenner} as follows.
\begin{define} \label{AWPP}
The class $\bold{AWPP}$ consists of those languages $\mathcal{L}$ such that there exists a $\bold{GapP}$ function $f$, and a polynomial $r$ such that 
\begin{itemize}
\item If $x\in\mathcal{L}$ then ${2}/{3}\leq {f(x)}/{2^{r(|x|)}}\leq 1,$
\item if $x\notin\mathcal{L}$ then $0\leq {f(x)}/{2^{r(|x|)}}\leq{{1}/{3}}$.
\end{itemize}
\end{define} 
The $1/3-2/3$ separation can be replaced by any constant, positive, separation \cite{Fenner}. 
\begin{thm} \label{awpp}
For any generalised probabilistic theory $\bold{G}$, $\bold{BGP}\subseteq\bold{AWPP}$.
\end{thm}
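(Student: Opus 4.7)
The plan is to apply Theorem~\ref{4} and Proposition~\ref{3} in sequence: first to express the acceptance amplitude of a suitable approximating circuit as $f(x)/2^{p(|x|)}$ for a $\bold{GapP}$ function $f$, then to ensure the approximation is tight enough that the $\bold{AWPP}$ separation is achieved.

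Let $\mathcal{L}\in\bold{BGP}$ be decided by a poly-size uniform circuit family $\{C_x\}$, with $P_x(\mathrm{accept})\geq 2/3$ on $\mathcal{L}$ and $\leq 1/3$ elsewhere. First, I would invoke Theorem~\ref{4} with the polynomial $w$ and constant in the theorem chosen large enough that the approximation parameter $\epsilon(|x|)$ forces the error bound of Proposition~\ref{3}, namely $D^{q(|x|)-1}q(|x|)\epsilon(|x|)N$, to be at most $1/12$---achievable because $D$ and $N$ depend only on the fixed gate set and $q$ is polynomial. This produces an $\epsilon(|x|)$-approximation $\{\widetilde{C_x}\}$ whose amplitude for acceptance has the form $\widetilde{P}_x=f(x)/2^{p(|x|)}$ for some $\bold{GapP}$ function $f$ and polynomial $p$, and Proposition~\ref{3} delivers
\[
|f(x)/2^{p(|x|)}-P_x(\mathrm{accept})|\leq 1/12.
\]
Consequently $f(x)/2^{p(|x|)}\geq 7/12$ on $\mathcal{L}$ and $\leq 5/12$ off $\mathcal{L}$.

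To match Definition~\ref{AWPP} precisely, I would then massage $f$ so that the corresponding ratio sits inside $[0,1]$. Setting $r(|x|)=p(|x|)+1$ and $f'(x)=f(x)+\lceil 2^{p(|x|)}/12\rceil$, a short calculation shows that $f'(x)/2^{r(|x|)}$ always lies in $[0,1]$, is bounded below by $1/3$ on $\mathcal{L}$, and is bounded above by $1/4$ off $\mathcal{L}$. Because $\bold{GapP}$ is closed under addition and every polynomial-time computable integer-valued function is $\bold{GapP}$ by Theorem~\ref{gap}, $f'$ is still $\bold{GapP}$; the resulting positive constant separation is enough to place $\mathcal{L}$ in $\bold{AWPP}$, by the remark after Definition~\ref{AWPP} that the $1/3$--$2/3$ thresholds can be replaced by any positive constant separation.

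The main obstacle is really just the arithmetic of driving the approximation error in Proposition~\ref{3} below a convenient constant while keeping all the circuit-description and matrix-entry approximations polynomial-time---but the uniformity condition on circuit families was designed precisely to allow this. All of the theory-dependent input, namely the linear representation of transformations flowing from probabilistic structure together with the tensor-product structure on parallel composition guaranteed by tomographic locality, is already packaged into Theorem~\ref{4}, which is why the remaining work is essentially a closure-property exercise for $\bold{GapP}$.
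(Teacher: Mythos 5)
Your argument is correct and follows the same route as the paper's own proof: both invoke Theorem~\ref{4} to write the acceptance amplitude of an approximating circuit family as $f(x)/2^{p(|x|)}$ with $f$ a $\bold{GapP}$ function, and both use Proposition~\ref{3} to push the approximation error below a constant so that the $7/12$--$5/12$ separation survives. The one point of divergence is how the ratio is forced into $[0,1]$ as Definition~\ref{AWPP} requires. The paper preprocesses the \emph{circuit family} itself, running each circuit in parallel with a biased coin so that $1/10\leq P_x(\mathrm{accept})\leq 9/10$ for all $x$, which together with a small enough $\epsilon(|x|)$ keeps the approximating amplitude inside $[0,1]$. You instead post-process the $\bold{GapP}$ function arithmetically, replacing $f$ by $f'(x)=f(x)+\lceil 2^{p(|x|)}/12\rceil$ and $p$ by $p+1$; this is a purely classical fix that avoids touching the circuits, and is legitimate because the additive term is a poly-time computable integer (hence $\bold{GapP}$ by Theorem~\ref{gap}) and $\bold{GapP}$ is closed under addition. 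The only blemish is a small arithmetic slip: because of the ceiling, your bound off $\mathcal{L}$ is $1/4+2^{-(p(|x|)+1)}$ rather than exactly $1/4$; since $p$ is a polynomial this still leaves a positive constant gap below the $1/3$ threshold for all but finitely many inputs (which can be hard-coded), so the remark after Definition~\ref{AWPP} about arbitrary constant separations still applies and the conclusion stands.
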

\begin{proof} 
If a language $\mathcal{L} \in \bold{BGP}$, then there is a poly-size uniform circuit family $\{C_x\}$ such that $P_x(\mathrm{accept}) \geq 2/3$ if $x\in \mathcal{L}$, and $P_x(\mathrm{accept}) \leq 1/3$ if $x\notin \mathcal{L}$. Assume that for all $x$, $1/10 \leq P_x(\mathrm{accept}) \leq 9/10$.\footnote{This can be ensured, if necessary, by considering the circuit $C_T$ to be carried out in parallel with a biased coin toss. With probability $1/5$, the coin is tails, in which case the output of the circuit is ignored, and acceptance/rejection are returned with probability $1/2$ each. Taken together, these circuits and coin tosses define a modified circuit family $\{C_x'\}$, and in the following, approximating circuit families can be assumed to be defined relative to $\{C_x'\}$. } By Theorem~\ref{4}, there is an $\epsilon(|x|)$-approximation to $\{C_x\}$ such that the amplitudes determined by the approximating family satisfy
$$\widetilde{P}_x(\mathrm{accept})=\frac{f(x)}{2^{p(|x|)}},$$ 
with $f$ a $\bold{GapP}$ function. Furthermore, for any polynomial $w$, $\epsilon(|x|)$ can be chosen so that $\epsilon(|x|)\leq {1}/{D^{w(|x|)}}$. Hence by Proposition~\ref{3}, $\epsilon(|x|)$ can be chosen small enough that $\widetilde{P}_x(\mathrm{accept})\geq 7/12$ if $x\in \mathcal{L}$ and $\widetilde{P}_x(\mathrm{accept})\leq 5/12$ if $x\notin \mathcal{L}$, and for all $x$, $0 \leq \widetilde{P}_x(\mathrm{accept}) \leq 1$. Taking $p(|x|)$ to be the function $r(|x|)$ in definition \ref{AWPP} and noting that $5/12-7/12$ is a constant, positive, separation, gives the result.
\end{proof}
     
It is well known that $\bold{AWPP}\subseteq\bold{PP}\subseteq\bold{PSPACE}$ (see, for example, \cite{FFKL} and references therein). 

\section{Proof of Theorem~\ref{yeow}} 

An alternate definition of the class $\bold{PP}$ can be stated \cite{Li, Fenner} as follows.  
\begin{define} \label{PP}
The class $\bold{PP}$ consists of those languages $\mathcal{L}$ such that there exist $\bold{GapP}$ functions $f$ and $h$ so that for all $x$
\begin{itemize}
\item If $x\in\mathcal{L}$ then ${2}/{3}\leq {f(x)}/{h(x)}\leq 1,$ 
\item if $x\notin\mathcal{L}$ then $0\leq {f(x)}/{h(x)}\leq{{1}/{3}}.$
\end{itemize}
\end{define}
The $1/3-2/3$ separation can be replaced by any constant, positive, separation \cite{Fenner}. 

In order to prove Theorem~\ref{yeow}, consider a uniform family of circuits $\{C_x\}$ in the generalised probabilistic theory $\bold{G}$. Let $S_T$ be a subset of the possible outcomes of the circuit $C_T$, with respect to which post-selection is defined, so that $P_T(\mathrm{accept}|S_T) \geq 2/3$ for $T \in \mathcal{L}$ and $\leq 1/3$ for $T \notin \mathcal{L}$.  As in the proof of Theorem~\ref{5}, assume that these probabilities are also bounded away from $0$ and $1$ so that for all $T$, $1/10 \leq P_T(\mathrm{accept}|S_T)  \leq 9/10$.\footnote{This can be done, as before, by the introduction of a biased coin parallel to the circuit. If the circuit outcome is in $S_T$ and the coin is heads, then accept or reject, depending on the circuit outcome. If the outcome is in $S_T$ and the coin is tails then accept or reject with probability $1/2$ each.}

By Theorem~\ref{4}, there is an $\epsilon(|x|)$-approximation to $\{C_x\}$ such that, in the approximating family, the joint amplitude to accept the computation and have an outcome from the set $S_T$ is
\[
\widetilde{P}_T(\mathrm{accept}, S_T)=\frac{f(T)}{2^{p(|T|)}},
\]
with $f$ a $\bold{GapP}$ function. Similarly, 
\[
\widetilde{P}_T(S_T)=\frac{g(T)}{2^{q(|T|)}},
\]
with $g$ a $\bold{GapP}$ function and $q$ a polynomial. Furthermore, for any polynomial $w$ and constant $D$, $\epsilon(|x|)$ can be chosen so that $\epsilon(|x|)\leq 1/D^{w(|x|)}$. Hence by Proposition~\ref{3} and the fact that we are post-selecting on at most exponentially-unlikely outcomes, $\epsilon(|x|)$ can be chosen small enough that for the approximating circuit family, $\widetilde{P}_T(S_T) > 0$. This means that for the approximating circuit family, the conditional
\[
\widetilde{P}_T(\mathrm{accept}|S_T) = \frac{\widetilde{P}_T(\mathrm{accept}, S_T)}{\widetilde{P}_T(S_T)},
\]
is well defined. Furthermore, $\epsilon(|x|)$ can be chosen small enough that $\widetilde{P}_T(\mathrm{accept}|S_T)\geq 7/12$ if $x\in \mathcal{L}$, $\widetilde{P}_T(\mathrm{accept}|S_T)\leq 5/12$ if $x\notin \mathcal{L}$, and using the assumption that the original circuit family probabilities are bounded away from $0$ and $1$,  the approximating amplitudes satisfy $0 \leq \widetilde{P}_T(\mathrm{accept}|S_T) \leq 1$.

Now,
\[
\widetilde{P}_T(\mathrm{accept}|S_T) = \frac{2^{q(|T|)}f(T)}{2^{p(|T|)}g(T)}  = \frac{l(T)}{h(T)},
\]
where $h(T)=2^{p(|T|)}g(T)$ and $l(x)=2^{q(|T|)}f(T)$ are $\bold{GapP}$ functions. This follows from Theorem~\ref{gap}, Proposition~\ref{bant}, and the fact that both $p$ and $q$ are polynomials taking values in $\mathbb{N}$. The result follows.
 
\section{Proof of Theorem~\ref{NP}}   \label{last}
 
Denote by $\bold{PH}$ the polynomial time hierarchy: the union of an infinite hierarchy of classes $\bold\Sigma_k$, $\bold\Delta_k$ and $\bold\Pi_k$ for $k\in\mathbb{N}$, where $\bold\Sigma_0=\bold\Delta_0=\bold\Pi_0=\bold{P}$ and $\bold\Sigma_{k+1}= \bold{NP}^{\bold\Sigma_k}$, $\bold\Delta_{k+1}= \bold{P}^{\bold\Sigma_k}$ and $\bold\Pi_{k+1}= \bold{coNP}^{\bold\Sigma_k}$. The polynomial time hierarchy is a natural way of classifying the complexity of problems beyond the class $\bold{NP}$. It is a strongly held belief in computer science that $\bold{NP}$ includes non-polynomial-time problems.

Theorem~\ref{NP} is a corollary of two results, the first of which is due to \cite{FFKL} and \cite{Y}:
\begin{thm}
There exists an oracle $\bold{A}$ such that $\bold{P}^{\bold{A}}=\bold{AWPP}^{\bold{A}}$ and the polynomial time hierarchy is infinite.
\end{thm}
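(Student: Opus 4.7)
The plan is to construct $\bold{A}$ by a stage-by-stage finite-extension / diagonalization argument that simultaneously achieves two goals: forcing $\bold{AWPP}^{\bold{A}}\subseteq\bold{P}^{\bold{A}}$ (the reverse containment is automatic from $\bold{P}\subseteq\bold{AWPP}$), and keeping $\bold{\Sigma}_{k+1}^{\bold{A}}\neq\bold{\Sigma}_k^{\bold{A}}$ for all $k$. First I would fix, once and for all, a partition of $\{0,1\}^*$ into disjoint ``address regions'': for every polynomial-time non-deterministic Turing machine $M$ and every polynomial $r$ a region $R_{M,r}$, and one further region $R_{PH}$ reserved for the polynomial-hierarchy separation. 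Each region's positions are chosen so that membership in a region is decidable in polynomial time and different regions are padded far apart in length, so that queries of one kind cannot be confused with queries of another.

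For the collapse requirement I would use the $\bold{GapP}$ characterization of $\bold{AWPP}$ given in Definition~\ref{AWPP}. Enumerate all triples $(M,r,\text{threshold})$ that potentially witness a language in $\bold{AWPP}^{\bold{A}}$; at the $i$-th collapse stage, for each input $x$ in the relevant length range, compute (relative to the oracle extension built so far) the normalized gap $f_M(x)/2^{r(|x|)}$ and write a single ``answer bit'' to a canonical location inside $R_{M,r}$ recording which of the two conditions in Definition~\ref{AWPP} holds. A deterministic polynomial-time machine on input $x$ then needs a single query to recover the $\bold{AWPP}$-answer, giving $\bold{AWPP}^{\bold{A}}\subseteq\bold{P}^{\bold{A}}$. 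For the separation requirement I would use Yao's / H\aa stad's oracle construction for $\bold{PH}$: for each $k$ encode inside $R_{PH}$ a sequence of alternating blocks of carefully chosen sizes whose associated test language $L_k$ lies in $\bold{\Sigma}_{k+1}^{\bold{A}}$ but, by the switching lemma applied to a suitably random/generic portion of $R_{PH}$, cannot be decided by any $\bold{\Sigma}_k^{\bold{A}}$ machine on infinitely many inputs.

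The main obstacle, and the real substance of the theorem, is showing that these two families of requirements can coexist. The collapse bits are \emph{deterministic functions} of the rest of the oracle, whereas the $\bold{PH}$-separation demands that a growing portion of the oracle behave in a sufficiently generic (switching-lemma) way; one must check that the extra bits fixed in the $R_{M,r}$ regions never accidentally give a $\bold{\Sigma}_k$-machine enough information to decide $L_k$. Following the strategy of Fenner, Fortnow, Kurtz, Li, and Yao, I would handle this by two complementary devices: (i)~keep the collapse and separation regions syntactically disjoint so that any $\bold{\Sigma}_k$-machine querying $R_{M,r}$ can be ``unrolled'' into a larger $\bold{\Sigma}_k$-machine querying only $R_{PH}$ together with a fixed polynomial-sized advice string listing the relevant collapse bits, and (ii)~show by a union bound over stages that the switching-lemma failure probability, summed over all requirements ever activated, is strictly less than one, so that some generic choice of $R_{PH}$ satisfies every separation requirement. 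The scheduling of stages has to ensure that when a collapse bit at stage $i$ is written, the portion of $R_{PH}$ it depends on has already been frozen, which is the point where the construction has to be most carefully orchestrated.
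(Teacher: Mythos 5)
Your sketch has the right overall architecture---the result is indeed obtained by interleaving an encoding argument (to force $\bold{AWPP}^{\bold{A}}\subseteq\bold{P}^{\bold{A}}$) with Yao--H{\aa}stad style diagonalization (to keep the hierarchy infinite)---but note first that the paper does not prove this statement at all: it imports it wholesale from Fenner--Fortnow--Kurtz--Li and Yao, so you are being compared against a citation rather than an argument. Measured on its own terms, your proposal has a genuine gap at exactly the point you yourself flag as ``the real substance of the theorem.'' Here is a concrete way to see it: nothing in your compatibility argument (devices (i) and (ii)) uses the bounded-gap promise of $\bold{AWPP}$. Replace $\bold{AWPP}$ by $\bold{PP}$---which has the same $\bold{GapP}$ characterisation minus the promise---and your construction reads verbatim: enumerate machines, write an answer bit recording whether $f_M(x)>0$, unroll queries to answer bits into advice, union-bound the switching-lemma failures. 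But the conclusion ``$\bold{P}^{\bold{A}}=\bold{PP}^{\bold{A}}$ and the hierarchy is infinite'' is false for \emph{every} oracle, since $\bold{NP}^{\bold{A}}\subseteq\bold{PP}^{\bold{A}}$ would collapse $\bold{PH}^{\bold{A}}$. So the compatibility step must fail somewhere, and it fails at device (i): the answer bit for a computation on an input of length $m$ is not a constant that can be listed as advice---it is a threshold/gap function of the not-yet-frozen oracle bits of $R_{PH}$ at lengths up to $\mathrm{poly}(m)$, so a $\bold{\Sigma}_k$ machine that queries it is handed non-constant, non-low-depth access to the very region over which the switching lemma is being applied; a single such query can decide the test language $L_k$ outright.

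The missing ingredient is therefore a structural lemma about $\bold{AWPP}$ specifically: one must show that whenever the promise ($f(x)/2^{r(|x|)}\in[0,1/3]\cup[2/3,1]$ for all surviving extensions) actually holds, the resulting answer bit is robust---determined by already-frozen portions of the oracle, or insensitive to the generic choices in $R_{PH}$ in a way that allows it to be absorbed into the low-depth circuit---and that any requirement for which the promise can be forced to fail may simply be killed by a finite extension. This dichotomy, together with the accompanying analysis of how far a bounded-gap $\bold{GapP}$ value can move under finite oracle extensions, is the actual content of the Fenner--Fortnow--Kurtz--Li toolkit, and it is precisely the step your proposal defers to the literature rather than supplies.
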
 
The second is that Theorem~\ref{awpp} \emph{relativizes}.
\begin{thm}
For any classical oracle $\bold{A}$ we have that $\bold{BGP}_{cl}^{\bold{A}}\subseteq\bold{AWPP}^{\bold{A}}$ for any causal $\bold{G}$.
\end{thm}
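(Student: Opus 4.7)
The plan is to check that the proof of Theorem~\ref{awpp} relativizes when oracle access is granted in the sense of the classical oracle formalism of Section~\ref{oracle}. Concretely, given a $\bold{BGP}_{cl}^{\bold{A}}$ circuit family $\{C_x\}$, I want to exhibit a $\bold{GapP}^{\bold{A}}$ function whose value, divided by a suitable power of two, is the (approximate) acceptance amplitude; this places the language in $\bold{AWPP}^{\bold{A}}$ by the relativized form of Definition~\ref{AWPP}. All of the closure properties of $\bold{GapP}$ used in Appendix~\ref{first} (Theorem~\ref{gap} and Proposition~\ref{bant}) relativize verbatim, since they are proven by construction of non-deterministic machines that simulate the machines defining the input $\bold{GapP}$ functions; granting these machines access to $\bold{A}$ carries the constructions through without change.

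The main step is to relativize Theorem~\ref{4}. Fix an outcome sequence $z = (r_1,\dots,r_q)$ for the adaptive circuit $C_T$. Because the theory is causal, an adaptive circuit is well defined: the description of the $n$-th gate actually applied along the branch specified by $z$, as well as any oracle inputs it must compute, is determined by the previously observed outcomes $r_1,\dots,r_{n-1}$ via a poly-time Turing machine with access to $\bold{A}$ (using the extended uniformity condition from Section~\ref{oracle}). Hence, given $(T,z,n,i_n,i_{n-1})$, a machine with oracle $\bold{A}$ can compute, in time polynomial in $|T|$, the $(i_n,i_{n-1})$-entry of the scaled matrix for the $n$-th gate in the $\epsilon(|T|)$-approximating family. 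By the relativized version of Theorem~\ref{gap} this matrix-entry function is a $\bold{GapP}^{\bold{A}}$ function.

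The relativized closure properties of Proposition~\ref{bant} then let me assemble the amplitude for a given $z$ as a product of matrix entries ranging over the internal indices $i_1,\dots,i_{q-1}$, and then sum over accepting outcomes $z$, noting that the acceptance function $a(z)$ is also decidable in poly-time with oracle $\bold{A}$ by the extended uniformity condition. The result is that the acceptance amplitude of the approximating family has the form $f^{\bold{A}}(T)/2^{p(|T|)}$ for some $\bold{GapP}^{\bold{A}}$ function $f^{\bold{A}}$ and polynomial $p$. Proposition~\ref{3} relativizes trivially, since it is purely a statement about products of matrices with rational entries and makes no reference to how those entries were generated. Combining these ingredients with the separation argument in the proof of Theorem~\ref{awpp} (introducing an independent biased coin toss, if needed, to keep probabilities bounded strictly inside $[0,1]$) yields the $1/3$--$2/3$ gap required to witness membership in $\bold{AWPP}^{\bold{A}}$.

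The main thing to verify carefully is that the adaptive, oracle-augmented circuit still fits into the matrix-product form that powers the $\bold{GapP}$ argument. Causality is essential here: without it, the notion of adaptive circuit on which the classical-oracle construction of Section~\ref{oracle} rests is not well defined, and there is no canonical branch of the circuit along which to evaluate matrix entries for a given outcome sequence. Given causality, each outcome sequence $z$ cleanly determines a non-adaptive sequence of gates of polynomial length whose joint amplitude is a product of matrix entries exactly as in the non-oracle case, so that the entire $\bold{GapP}$/$\bold{AWPP}$ argument goes through with $\bold{A}$ attached throughout.
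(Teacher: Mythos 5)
Your proof is correct and follows essentially the same route as the paper's: relativize the $\bold{GapP}$ machinery and observe that, under the extended uniformity condition, the matrix entries and the acceptance function remain poly-time computable by a Turing machine with access to $\bold{A}$, so Theorem~\ref{4} and hence Theorem~\ref{awpp} go through with $\bold{GapP}^{\bold{A}}$ in place of $\bold{GapP}$. You are in fact more explicit than the paper about the one genuinely new point, namely that causality makes the adaptive, oracle-augmented circuit well defined, so that each fixed outcome sequence $z$ determines a non-adaptive polynomial-length branch whose amplitude is still a product of oracle-computable matrix entries.
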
  
\begin{proof} 
Given the uniformity condition for circuit families with an oracle, entries in the matrices representing gates in a circuit are all computable in polynomial time by a Turing machine with access to the oracle $\bold{A}$. Thus the proof of Theorem~\ref{4} goes through essentially unchanged, except that in this case the conclusion is that the acceptance amplitude is 
\[
\widetilde{P}_x(\mathrm{accept})=\frac{f(x)}{2^{p(|x|)}},
\]
where $p(|x|)$ is a polynomial function of the size of the input and $f$ is a $\bold{GapP^{\bold{A}}}$ function. A $\bold{GapP^{\bold{A}}}$ function is defined in a similar fashion to a $\bold{GapP}$ function, except instead of counting the difference between the number of accepting and rejecting paths for any input into a non-deterministic Turing machine, $\bold{GapP^{\bold{A}}}$ functions count the difference between the number of accepting and rejecting paths for any input into a non-deterministic Turing machine with access to the oracle $\bold{A}$. 

$\bold{AWPP}^{\bold{A}}$ can be defined with respect to $\bold{GapP^{\bold{A}}}$ functions by just replacing every mention of $\bold{GapP}$ functions with $\bold{GapP^{\bold{A}}}$ functions in Definition~\ref{AWPP}. Thus the proof that $\bold{BGP}_{cl}^{\bold{A}}\subseteq\bold{AWPP}^{\bold{A}}$, for any causal GPT and oracle $\bold{A}$, goes through exactly the same as the proof of Theorem~\ref{awpp}.
\end{proof} 
Hence we obtain
\begin{thm} 
There exists a classical oracle $\bold{A}$ relative to which $\bold{BGP}_{cl}^{\bold{A}}\subseteq\bold{P}^{\bold{A}}$, for all causal $\bold{G}$, and the polynomial time hierarchy is infinite. 
\end{thm}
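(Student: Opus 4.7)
The plan is to combine the two preceding theorems mechanically. Fix once and for all the oracle $\bold{A}$ supplied by the Fortnow--Fenner--Kurtz--Li / Yao result, i.e. an $\bold{A}$ for which $\bold{P}^{\bold{A}}=\bold{AWPP}^{\bold{A}}$ and the polynomial hierarchy $\bold{PH}^{\bold{A}}$ is infinite. This oracle does not depend on the theory $\bold{G}$, so the same $\bold{A}$ will work uniformly across all causal generalised probabilistic theories; this is important because the statement demands a single oracle that works for every such $\bold{G}$.

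Next, for any causal GPT $\bold{G}$, apply the relativised upper bound $\bold{BGP}_{cl}^{\bold{A}}\subseteq\bold{AWPP}^{\bold{A}}$ proved just above (the relativisation of Theorem~\ref{awpp}). Chaining these inclusions gives
\[
\bold{BGP}_{cl}^{\bold{A}} \;\subseteq\; \bold{AWPP}^{\bold{A}} \;=\; \bold{P}^{\bold{A}},
\]
which is the first half of the claim. The second half, that $\bold{PH}^{\bold{A}}$ is infinite, is inherited directly from the choice of $\bold{A}$. To then recover Theorem~\ref{NP} as a consequence, observe that an infinite polynomial hierarchy forces $\bold{NP}^{\bold{A}}\neq\bold{P}^{\bold{A}}$ (otherwise $\bold{PH}^{\bold{A}}$ collapses to $\bold{P}^{\bold{A}}$), so combining with the containment above yields $\bold{NP}^{\bold{A}}\not\subseteq\bold{BGP}_{cl}^{\bold{A}}$ for every causal tomographically local $\bold{G}$.

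The non-routine work is really all packaged into the two input theorems, and the ``main obstacle'' is verifying that the relativised upper bound actually holds uniformly in $\bold{G}$. Concretely, one has to check that the proof of Theorem~\ref{awpp} relativises cleanly when the uniform circuit family is allowed to make classical oracle queries. The two points to confirm are: (i) the uniformity condition still delivers an $\epsilon$-approximation $\{\widetilde{C_x}\}$ whose matrix entries are computable in polynomial time \emph{by a Turing machine with oracle access to $\bold{A}$}, because the inputs to the oracle gates depend on prior classical outcomes and the design of subsequent sub-circuits is dictated by an oracle-aided Turing machine; and (ii) the closure properties of $\bold{GapP}$ under polynomial-length sums and products (Proposition~\ref{bant}) lift verbatim to $\bold{GapP}^{\bold{A}}$, since the non-deterministic machines certifying the $\bold{GapP}^{\bold{A}}$ structure may themselves query $\bold{A}$. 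Granted these, the argument producing the amplitude representation $\widetilde{P}_x(\mathrm{accept})=f(x)/2^{p(|x|)}$ with $f\in\bold{GapP}^{\bold{A}}$ is identical to the unrelativised case, and Definition~\ref{AWPP} adapted to $\bold{GapP}^{\bold{A}}$ finishes the inclusion.

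Putting the pieces together, the proof is essentially a one-line diagram chase; the work lies entirely in having established (a) the oracle separation $\bold{P}^{\bold{A}}=\bold{AWPP}^{\bold{A}}$ with $\bold{PH}^{\bold{A}}$ infinite, and (b) that all constructions underlying $\bold{BGP}_{cl}\subseteq\bold{AWPP}$ are oracle-friendly. Both are black-boxed, so the final theorem follows immediately.
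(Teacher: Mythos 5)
Your proposal is correct and matches the paper's argument exactly: the paper likewise obtains this theorem by chaining the Fortnow--Rogers/Yao oracle (for which $\bold{P}^{\bold{A}}=\bold{AWPP}^{\bold{A}}$ and $\bold{PH}^{\bold{A}}$ is infinite) with the relativised inclusion $\bold{BGP}_{cl}^{\bold{A}}\subseteq\bold{AWPP}^{\bold{A}}$, and your checks (i) and (ii) are precisely the points the paper verifies when showing Theorem~\ref{awpp} relativises. Nothing further is needed.
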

This implies that there exists a classical oracle relative to which $\bold{NP}$ is not contained in $\bold{BGP}$, for any causal theory $\bold{G}$ satisfying tomographic locality. This generalises the results of \cite{AWPP} from quantum theory to general theories. 

\end{appendices}


\begin{thebibliography}{10}

\bibitem{Nielsen}
M.~A. Nielsen and I.~L. Chuang, {\em Quantum computation and Quanum
  information}.
\newblock Cambridge University press, 2000.

\bibitem{PR}
W.~van Dam, {\em Implausible consequences of superstrong nonlocality}.
\newblock arXiv:quant-ph/0501159, 2005.

\bibitem{DensePR}
A.~J. Short and J.~Barrett, {\em Strong nonlocality: A trade-off between states
  and measurements}.
\newblock New Journal of Physics 12, 033034, 2010.

\bibitem{tele}
H.~Barnum, J.~Barrett, M.~Leifer, and A.~Wilce, {\em Teleportation in General
  Probabilistic theories}.
\newblock arXiv:quant-ph/0805.3553, 2008.

\bibitem{broadcast}
H.~Barnum, J.~Barrett, M.~Leifer, and A.~Wilce, {\em A generalized
  no-braodcasting theorem}.
\newblock Phys. Rev. Lett 99.240501, 2007.

\bibitem{Crypt}
J.~Barrett, L.~Hardy, and A.~Kent, {\em No Signalling and Quantum key
  Distribution}.
\newblock Phys. Rev. Lett 95, 010503, 2005.

\bibitem{Rev}
D.~Gross, M.~Mueller, R.~Colbeck, and O.~Dahlsten, {\em All reversible dynamics
  in maximal non-local theories are trivial}.
\newblock Phys. Rev. Lett. 104, 080402, 2010.

\bibitem{Non}
D.~S. Abrams and S.~Lloyd, {\em Nonlinear quantum mechanics implies
  polynomial-time solution for NP-complete problems}.
\newblock Phys. Rev. Lett 81, 3992-3995, 1998.

\bibitem{CTC}
D.~Bacon, {\em Quantum computational complexity in the presence of closed
  timelike curves}.
\newblock Phys. Rev. A 70, 032309, 2004.

\bibitem{Post}
S.~Aaronson, {\em Quantum computing, postselection and probabilistic polynomial
  time}.
\newblock arXiv:quant-ph/0412187v1, 2004.

\bibitem{sample}
S.~Aaronson, {\em Quantum Computing and Hidden Variables II: The Complexity of
  Sampling Histories}.
\newblock arXiv:quant-ph/0408119, 2004.

\bibitem{Pavia1}
G.~Chiribella, G.~M. D'Ariano, and P.~Perinotti, {\em Probabilistic theories
  with purification}.
\newblock Phys. Rev. A 81, 062348, 2010.

\bibitem{Pavia2}
G.~Chiribella, G.~M. D'Ariano, and P.~Perinotti, {\em Informational derivation
  of Quantum Theory}.
\newblock Phys. Rev. A 84, 012311, 2011.

\bibitem{Jon}
J.~Barrett, {\em Information processing in generalised probabilistic theories}.
\newblock Phys. Rev. A 75 No. 3, 032304, 2007.

\bibitem{Hardy1}
L.~Hardy, {\em Quantum theory from five reasonable axioms}.
\newblock arXiv:quant-ph/0101012v4, 2001.

\bibitem{Hardy2}
L.~Hardy, {\em Reformulating and reconstructing quantum theory}.
\newblock arXiv:quant-ph/1104.2066v3, 2011.

\bibitem{hlp}
J.~Henson, R.~Lal, and M.~Pusey, {\em Theory-independent limits on correlations
  from generalised Bayesian networks}.
\newblock New J. Phys. 16 113043, 2014.

\bibitem{box}
S.~Popescu and D.~Rohrlich, {\em Quantum nonlocality as an axiom}.
\newblock Found. Phys. Volume 24, Issue 3, pp379-385, 1994.

\bibitem{cause}
G.~D'Ariano, F.~Manessi, and P.~Perinotti, {\em Determinism without causality}.
\newblock arXiv:quant-ph/1301.7578, 2013.

\bibitem{kit}
D.~Aharonov, A.~Kitaev, and N.~Nisan, {\em Quantum circuits with mixed states}.
\newblock arXiv:quant-ph/9806029, 1998.

\bibitem{Greg}
G.~Kuperberg, {\em How hard is it to approximate the Jones polynomial?}
\newblock arXiv:quant-ph/0908.0512v2, 2014.

\bibitem{IQP}
R.~Jozsa, D.~Shepherd, and M.~Bremner, {\em Classical simulation of commuting
  qauntum computations implies collapse of the polynomial hierarchy}.
\newblock arXiv:quant-ph/1005.1407v1, 2010.

\bibitem{Clean}
J.~Fitzsimons, T.~Morimae, and K.~Fujii, {\em On the hardness of classically
  simulating the one clean qubit model}.
\newblock Phys. Rev. Lett. 112, 130502, 2014.

\bibitem{Mixed}
A.~Ambainis, L.~Schuman, and U.~Vazirani, {\em Computing with highly mixed
  states}.
\newblock arXiv:quant-ph/0003136v1, 2000.

\bibitem{Comp}
L.~Hemaspaandra and M.~Ogihara, {\em The complexity theory companion}.
\newblock Springer, 2002.

\bibitem{Howard}
H.~Barnum, {\em Private correspondence with the authors}.
\newblock 2014.

\bibitem{phaseoracle}
J.~Machta, {\em Phase information in quantum oracle computing}.
\newblock arXiv:quant-ph/9805022v1, 1998.

\bibitem{phase}
A.~Garner, O.~Dahlsten, Y.~Nakata, M.~Murao, and V.~Vedral, {\em A generl
  framework for phase and interference}.
\newblock New J. Phys. 15 093044, 2013.

\bibitem{Sub}
C.~Bennett, E.~Bernstein, G.~Brassard, and U.~Vazirani, {\em Strengths and
  weakneses of quantum computing}.
\newblock arXiv:quant-ph/9701001v1, 1997.

\bibitem{AWPP}
L.~Fortnow and J.~Rogers, {\em Complexity limitations on quantum computation}.
\newblock arXiv:cs/9811023v1, 1998.

\bibitem{markus}
M.~Mueller and C.~Ududec, {\em The structure of reversible computation
  determines the self-duality of quantum theory}.
\newblock Phy. Rev. lett. 108, 130401, 2012.

\bibitem{Grav}
L.~Hardy, {\em Probability theories with dynamical causal structure: A new
  framework for quantum gravity}.
\newblock arXiv:gr-qc/0509120v1, 2005.

\bibitem{lucien} 
L.~Hardy, {\em Quantum gravity computers: On the theory of computation with
  indefinite causal structure}.
\newblock arXiv:quant-ph/0701019v1, 2007.

\bibitem{Comp1}
J.~Watrous, {\em Quantum computational complexity}.
\newblock arXiv:quant-ph/0804.3401, 2008.

\bibitem{Fenner}
S.~Fenner, {\em PP-lowness and simple definition of AWPP}.
\newblock Theory of Computing Systems, Volume 36, Issue 2,
  2003.

\bibitem{FFKL}
S.~Fenner, L.~Fortnow, S.~Kurtz, and L.~Li, {\em An oracle builders toolkit}.
\newblock Proceedings of the 8th IEEE structure in complexity theory
  conference, 1993.

\bibitem{Li}
L.~Li, {\em On the counting functions}.
\newblock PhD thesis, University of Chicago, 1993.  

\bibitem{Y}
A.~Yao, {\em Separating the polynomial time hierarchy by oracles: part 1}.
\newblock Proc. 26th IEEE FOCS, 1985.

\bibitem{NdB}
N.~de Beaudrap, {\em On computation with `probabilities' modulo k}.
\newblock arXiv:cs.CC/1405.7381v2.

\bibitem{JMA}
J.~Allen, {\em Treating time travel quantum mechanically}.
\newblock Phys. Rev. A 90(4) 042107, 2014.

\bibitem{PaviaNoCause}
G.~Chiribella, G.~D'Ariano, P.~Perinotti and B.~Valiron, {\em Quantum computations without definite causal structure }
 \newblock Phys. Rev. A 88, 022318, 2013.
 
 \bibitem{BruknerNoCause}
M.~Ara{\'u}jo, F.~Costa and C.~Brukner {\em Computational advantage from quantum-controlled ordering of gates}
 \newblock Phys. Rev. Lett. 113, 250402, 2014.

\end{thebibliography}
\end{document}